\documentclass[fleqn,11pt]{article} 

\usepackage[english]{babel} 

\usepackage[inline]{enumitem}
\usepackage{arydshln}
\usepackage{amsmath}
\usepackage{amssymb}
\usepackage{amsfonts}
\usepackage[bb=boondox]{mathalfa}

\usepackage{bm}
\newcommand{\symbfit}[1]{\bm{#1}}

\usepackage{amsthm}
\usepackage{mathtools}

\theoremstyle{plain}
\newtheorem{theorem}{Theorem}[section]

\theoremstyle{definition}

\theoremstyle{remark}
\newtheorem*{remark}{Remark}

\usepackage{graphicx}
\usepackage[hang,small,bf]{caption}
\usepackage[subrefformat=parens]{subcaption}
\usepackage{rotating}
\usepackage[authoryear,round]{natbib}

\usepackage[%
 setpagesize=false,%
 bookmarks=true,%
 bookmarksdepth=tocdepth,%
 bookmarksnumbered=true,%
 colorlinks=true,%
 linkcolor=red,%
 citecolor=blue,%
 urlcolor=magenta,%
 pdftitle={},%
 pdfsubject={},%
 pdfauthor={},%
 pdfkeywords={}%
]{hyperref}

\begin{document}

\title{ 
Extended State-dependent Hawkes Process for Limit Order Books: 
Mathematical Foundation and the Reproduction of Volatility Signature Plots
\footnote{This work was supported by JSPS KAKENHI Grant Number JP20K14366 
and CREST, JST. 
The ``Nikkei NEEDS Tick Data'' used in this study was 
purchased through the aforementioned KAKENHI grant.}
}

\author{
Akitoshi Kimura 
\thanks{Sophia University
Department of Information and Communication Sciences
Faculty of Science and Technology
Mail: a\_kimura@sophia.ac.jp
Japan Science and Technology Agency, CREST}
}
\date{\today}

\maketitle

\begin{abstract}
  This paper proposes an Extended State-Dependent Hawkes Process (ExsdHawkes) 
  to model the intricate dynamics of Limit Order Books (LOBs). 
  Our theoretical contribution lies in relaxing traditional constraints 
  by allowing for state disappearances---a phenomenon frequently observed 
  in high-frequency trading. We mathematically prove, 
  using Karush--Kuhn--Tucker (KKT) conditions, 
  that the maximum likelihood estimation remains separable, 
  justifying an efficient two-step procedure. 
  
  In the empirical section, we apply our model 
  to three months of high-frequency tick data of Mitsubishi UFJ Financial Group (8306). 
  We demonstrate that ExsdHawkes uniquely reproduces 
  the volatility signature plot's characteristic upward slope 
  by capturing the "local super-criticality" triggered during disequilibrium states. 
  Crucially, we clarify that the transition out of equilibrium is 
  deterministically triggered by Aggressive Market Orders (AMS/AMB), 
  while Marketable Limit Orders (MLO) function 
  as the critical liquidity-depletion catalyst 
  that exacerbates the localized supercritical cascades.
  Comparative analysis reveals that models lacking physical constraints 
  (e.g., standard SD-Hawkes) suffer from explosive branching ratios 
  and fail to maintain simulation stability. 
  Our findings suggest that physical consistency is not merely a mathematical nicety, 
  but a prerequisite for accurately modeling macro-level volatility. 
  By enforcing the physical geometry to `pause' the 
  residual accumulation during inadmissible periods, 
  ExsdHawkes uniquely maintains statistical integrity 
  where unconstrained models succumb to structural bias.
\end{abstract} 

\section{Introduction}\label{sec:Introduction}
The dynamics of Limit Order Books (LOBs) have been a focal point of 
market microstructure research, particularly in understanding how 
discrete order arrivals translate into continuous price volatility. 
One of the most intriguing phenomena in this domain is 
the upward slope of the volatility signature plot---the observation that 
realized volatility increases significantly as the sampling frequency rises. 
While various stochastic models, including Hawkes processes, 
have attempted to explain this behavior, 
capturing the precise feedback loops between micro-level order flow 
and the macro-level volatility structure remains a significant challenge.

The primary difficulty lies in the fact that 
high-frequency volatility is not uniformly distributed over time. 
Empirical evidence suggests that volatility spikes are often concentrated in 
short-lived periods of market disequilibrium. 
In this paper, we propose that the origin of the upward slope in signature plots 
can be traced back to a specific three-stage dynamic process:
\begin{enumerate}
  \item Trigger: 
  The arrival of Aggressive Market Orders (AMS/AMB) 
  instantaneously consumes the prevailing best quotes, 
  deterministically forcing the LOB out of its equilibrium state ($x = 1$) 
  into a disequilibrium state ($x = 2+$).
  \item Catalyst and Transition: 
  Irrespective of whether the pre-arrival spread is 
  at the minimum tick ($x = 1$) or already widened, 
  Marketable Limit Orders (MLO) function as a severe liquidity-depletion catalyst 
  by rewriting price boundaries. 
  Due to the temporary removal or expansion of physical constraints, 
  order intensities exhibit explosive clustering ($n_{ex} > 1$).
  \item Local Super-criticality: 
  This phase of ``local super-criticality'' generates intense price fluctuations, 
  which manifest as the characteristic upward slope when sampled at high frequencies.
\end{enumerate}

Conventional state-dependent models, 
such as \citet{morariu2022state}, 
have laid the foundation for capturing these dynamics. 
However, they often impose rigid row-sum constraints on transition probabilities 
to ensure mathematical tractability, 
which can lead to physical inconsistencies---such as 
assigning non-zero probability to price-improving orders 
when the spread is already at its minimum tick size.

Our contribution is twofold. 
First, we propose an Extended State-dependent Hawkes Process (ExsdHawkes) that 
rigorously incorporates the physical constraints of LOBs. 
By applying Karush--Kuhn--Tucker (KKT) conditions, 
we show that transition probabilities can be estimated separately 
from Hawkes parameters even when some transitions are physically prohibited. 
Second, we apply this model to three months of MUFG tick data and demonstrate that 
only by enforcing physical consistency can a model remain stable 
during super-critical phases and accurately reproduce the volatility signature plot.

\section{Literature Review}\label{sec:Literature}

\subsection{Endogeneity and the Micro-structure of Volatility}\label{subsec:Endogeneity}
The endogenous nature of financial markets---where the arrival of an order is both a 
reaction to past events and a catalyst for future activity---has been a 
cornerstone of market microstructure theory. 
\citet{bacry2015hawkes} 
established the Hawkes process as the standard mathematical 
tool for capturing this ``reflexivity'', 
demonstrating that self-exciting point processes can explain the stylized facts of 
order flow clustering more effectively than traditional Poisson paradigms.

However, the connection between these micro-level excitations and 
macro-level price structures remains a subject of intense debate. 
Early studies focused primarily on permanent price impact and 
the long-range dependence of order signs. 
More recently, the focus has shifted toward how the 
high-frequency feedback loop contributes to realized volatility. 
A critical insight provided by 
\citet{filimonov2012quantifying} 
is that the ``reflexivity index'' (or branching ratio) of the market has 
increased over time due to the rise of algorithmic trading, 
pushing the system closer to criticality.

In this context, the volatility signature plot---which depicts realized volatility 
as a function of sampling frequency---serves as a crucial diagnostic tool. 
While standard Hawkes models can theoretically generate an 
upward slope in signature plots through their self-exciting mechanism, 
they often fail to capture the state-dependent ``bursts'' of 
activity that characterize real LOB dynamics. 
Our work extends this lineage by proposing that the macro-level upward slope 
is not just a result of general self-excitation, 
but is specifically driven by transient phases of local super-criticality 
triggered by liquidity-consuming orders in disequilibrium states. 
By situating our model within this endogeneity debate, 
we provide a structural explanation for why the branching ratio surges 
precisely when the LOB physical geometry is most stressed.

\subsection{Evolution of State-Dependency and the Physical Constraint Dilemma}\label{subsec:Evolution}
The integration of LOB state information into the Hawkes framework 
has evolved from simple baseline adjustments to complex, kernel-modulated systems. 
Early influential work by \citet{large2007measuring} 
utilized Hawkes processes to measure the resiliency of the LOB, 
albeit without a formal state-dependent kernel structure. 
Subsequently, models began to incorporate more granular state variables, 
such as the order book imbalance and the prevailing spread, 
to capture the varying ``market climate''. 

A landmark advancement was achieved by \citet{morariu2022state}, 
who established a rigorous foundation for state-dependent Hawkes processes 
where both the baseline intensities and the kernels are modulated by the LOB state. 
However, to ensure mathematical tractability---specifically to maintain the 
existence of a stationary distribution---conventional models have typically 
imposed a rigid row-sum constraint on transition probabilities, 
requiring that $\Phi_{e, x} = \sum_{x'} \phi_e(x, x')$ always equals unity. 

While this constraint simplifies stability analysis, 
it creates a fundamental  ``physical constraint dilemma'' in liquid markets. 
As highlighted by \citet{cont2011statistical}, 
the LOB is governed by rigid geometric rules, 
such as the minimum tick size. 
At the minimum spread ($x = 1$), 
a price-improving order (ALS or ALB) is physically impossible. 
Conventional models that force $\Phi_{e, x} = 1$ are thus compelled to 
assign non-zero probability to these impossible transitions.

This structural misalignment results in intensity leakage: 
the model predicts event arrivals in periods where the LOB's 
physical geometry strictly forbids them. 
Such leakage introduces systematic bias into the Hawkes kernels, 
as the estimation process attempts to ``compensate'' for the absence of 
observed events in inadmissible states by skewing the endogenous parameters. 
Our ExsdHawkes framework resolves this by allowing for 
``state disappearances''($\Phi_{e, x} = 0$), 
treating the LOB geometry as a first-order principle that precedes the stochastic dynamics.

\subsection{Refining Order Taxonomy: The Role of MLO}\label{subsec:Refining}
In most seminal limit order book (LOB) models, 
such as those by \citet{abergel2016limit} 
and \citet{lu2018high}, 
orders are typically categorized into a parsimonious three-tier taxonomy: 
limit orders, market orders, and cancellations. 
While this simplification ensures mathematical elegance, 
it fails to account for the unique informational and structural impact of 
Marketable Limit Orders (MLO).

Unlike a pure market order that only consumes liquidity at the best quote, 
an MLO is a hybrid instrument. 
It acts as a market order by triggering immediate execution, 
but simultaneously functions as a limit order that resets 
the price boundary on its own side. 
From a micro-structural perspective, 
an MLO is the primary instrument of liquidity depletion and price discovery. 
Specifically, as empirically demonstrated by the transition probabilities 
in Section~\ref{sec:Empirical}, 
Aggressive Market Orders (AMS/AMB) serve as the absolute boundary-breaking trigger, 
whereas MLOs function as the structural catalyst that deepens the volatility 
and blocks immediate quote recovery.

This ``leap'' in the spread is a critical event that is 
often obscured when MLOs are lumped together with standard market orders. 
In many high-frequency environments, 
the transition to a wider spread temporarily removes the ``buffer'' of liquidity, 
allowing for the transient phases of local super-criticality we observe. 
By isolating MLOs as a distinct event type within our state-dependent framework, 
we can model the LOB not as a static queue, 
but as a dynamic system characterized by rapid alternations 
between liquidity-rich stability and liquidity-sparse volatility. 
This refined taxonomy allows us to pinpoint the exact micro-level catalyst responsible 
for the macro-level upward slope of the volatility signature plot.

\section{Methodology}\label{sec:Methodology}

\subsection{Definition of ExsdHawkes}\label{subsec:Definition}
The Extended State-Dependent Hawkes Process (ExsdHawkes) incorporates 
the physical geometry of the LOB by relaxing the constraints on transition probabilities. 
Let $\{T_n\}$ be event times, $\{E_n\}$ event types, and $\{X_n\}$ the LOB states 
defined by the prevailing bid-ask spread. 
For a given parameterized vector $\nu_e \in \mathbb{R}^{+}$ 
and the endogenous Hawkes kernel matrix $k^{\symbfit{\theta}}_{e'e}(t, x')$ 
governed by the memory decay space $\symbfit{\theta} \in \mathcal{B}_H$, 
the state-dependent intensity $\tilde{\lambda}_{ex}(t)$ is formally defined as: 
\begin{equation}
\tilde{\lambda}_{ex}(t) 
= \phi_e(X(t), x) \left( \nu_e 
+ \sum_{e', x'} \int_{[0,t)} k^{\symbfit{\theta}}_{e'e}(t - s, x') d\tilde{N}_{e'x'}(s) \right), 
\label{eq:exsd_intensity_def}
\end{equation}
where $\phi_e(x, x') \in [0, 1]$ represents the transition weight. 
Critically, we impose the binary row-sum constraint: 
$\Phi_{e,x} := \sum_{x' \in \mathcal{X}} \phi_e(x, x') \in \{0, 1\}$. 
This formulation allows for ``state disappearances'', 
where $\Phi_{e,x} = 0$ signifies that event $e$ is physically inadmissible in state $x$.

\subsection{Global Separability of the Likelihood Parameter Space}
A central challenge in state-dependent point process modeling is ensuring that 
the empirical estimation of transition probabilities does not introduce 
structural bias into the endogenous Hawkes parameters. 
To achieve rigorous parameter orthogonality, 
we define the full mother parameter space $\mathcal{B}_{total}$ 
as the strict Cartesian product of two mutually independent sub-spaces: 
$\mathcal{B}_{total} = \mathcal{B}_{\phi} \times \mathcal{B}_{H}$, 
where $\symbfit{\phi} \in \mathcal{B}_{\phi}$ governs the transition networks, 
and $(\symbfit{\nu}, \symbfit{\theta}) \in \mathcal{B}_{H}$ 
entirely controls the endogenous baseline and memory decay infrastructure. 

To facilitate this, we decompose the state-dependent intensity $\tilde{\lambda}_{ex}(t)$ 
from Eq. (\ref{eq:exsd_intensity_def}) 
into the admissibility-weighted intensity $\lambda^{\dagger}_e(t)$:
\begin{equation}
\lambda^{\dagger}_e(t \mid \symbfit{\nu}, \symbfit{\theta}) 
= \sum_{x \in \mathcal{X}} \tilde{\lambda}_{ex}(t) 
= \Phi_{e,X(t)} \left( \nu_e 
+ \sum_{e', x'} \int_{[0,t)} k^{\symbfit{\theta}}_{e'e}(t - s, x') d\tilde{N}_{e'x'}(s) \right) 
\end{equation}
where $\Phi_{e,X(t)} = \sum_{x' \in \mathcal{X}} \phi_e(X(t), x') \in \{0, 1\}$ 
acts as the deterministic indicator of whether event $e$ is physically admissible 
in the current state.

\begin{theorem}[Separability of Estimation] \label{thm:3.1}
  The full log-likelihood 
  $\ln \mathcal{L}(\symbfit{\phi}, \symbfit{\nu}, \symbfit{\theta})$
  can be decomposed into two independent terms: 
  \begin{equation*}
    \ln \mathcal{L}(\symbfit{\phi}, \symbfit{\nu}, \symbfit{\theta})
    = \ln \mathcal{L}_{TP}(\symbfit{\phi}) 
    + \ln \mathcal{L}_H(\symbfit{\nu}, \symbfit{\theta}),  
  \end{equation*}
  where $\mathcal{L}_{TP}$ depends only on the transition counts and 
  $\mathcal{L}_H$ on the Hawkes parameters. 
\end{theorem}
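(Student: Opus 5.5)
I would start from the standard point-process log-likelihood for the marked process $\tilde N_{ex}$ on $[0,T]$,
\begin{equation*}
\ln \mathcal{L} = \sum_{e,x} \int_{[0,T]} \ln \tilde{\lambda}_{ex}(t)\, d\tilde{N}_{ex}(t) - \sum_{e,x}\int_0^T \tilde{\lambda}_{ex}(t)\, dt ,
\end{equation*}
and substitute the factorized form (\ref{eq:exsd_intensity_def}). Write $\lambda_e(t) := \nu_e + \sum_{e',x'}\int_{[0,t)} k^{\symbfit{\theta}}_{e'e}(t-s,x')\, d\tilde N_{e'x'}(s)$ for the unweighted Hawkes intensity, so that $\tilde\lambda_{ex}(t)=\phi_e(X(t),x)\lambda_e(t)$. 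Because $\lambda_e$ involves only past jumps, which are observed data, it depends only on $(\symbfit{\nu},\symbfit{\theta})$ and not on $\symbfit{\phi}$.

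The point term splits as follows. Each event $n$ has type $E_n$, pre-event state $X(T_n-)$ and post-event state $X_n$, and contributes
\begin{equation*}
\ln \phi_{E_n}(X(T_n-),X_n) + \ln \lambda_{E_n}(T_n).
\end{equation*}
Summing the first piece gives $\sum_{e,x,x'} N_e(x,x')\ln\phi_e(x,x')$, where $N_e(x,x')$ counts observed transitions. This depends only on $\symbfit\phi$ and on the counts, and I take it as $\ln\mathcal{L}_{TP}$. Two facts are needed here. First, every observed event has $\phi>0$, since otherwise the likelihood is $-\infty$. Second, no event is observed in a state with $\Phi_{e,x}=0$, so the logarithm is finite there and the convention $0\ln 0=0$ disposes of the unobserved cells.

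The compensator is the step where the binary constraint $\Phi_{e,x}\in\{0,1\}$ is essential, and I expect it to be the main obstacle. Summing over $x$ gives
\begin{equation*}
\sum_x \int_0^T \tilde\lambda_{ex}\,dt=\int_0^T \Phi_{e,X(t)}\lambda_e(t)\,dt .
\end{equation*}
For a general $\Phi$ this integral would couple $\symbfit\phi$ with $(\symbfit\nu,\symbfit\theta)$. Under the binary constraint, however, $\Phi_{e,x}$ is not a free parameter. It is fixed a priori by the LOB geometry, namely the admissibility indicator, so it is data. The integral is therefore $\int_0^T \mathbb{1}\{e \text{ admissible in } X(t)\}\lambda_e(t)\,dt$, which depends only on $(\symbfit\nu,\symbfit\theta)$ and the observed state path. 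Hence
\begin{equation*}
\ln\mathcal{L}_H=\sum_n \ln\lambda_{E_n}(T_n)-\sum_e\int_0^T \Phi_{e,X(t)}\lambda_e(t)\,dt ,
\end{equation*}
which is expressed through $\lambda^\dagger_e$.

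To connect with the paper's KKT emphasis, I would next maximize $\ln\mathcal{L}_{TP}$ over $\mathcal{B}_\phi$, which consists of $\phi\in[0,1]$ with $\sum_{x'}\phi_e(x,x')=\Phi_{e,x}$. For rows with $\Phi_{e,x}=1$, the Lagrangian stationarity and complementary slackness conditions give $\hat\phi_e(x,x')=N_e(x,x')/\sum_{x''}N_e(x,x'')$. The inequality multipliers on cells with zero counts are active, and rows with $\Phi_{e,x}=0$ are forced to zero. This shows that the feasible set respects the Cartesian product $\mathcal{B}_\phi\times\mathcal{B}_H$: the constraints on $\symbfit\phi$ never involve $(\symbfit\nu,\symbfit\theta)$, and vice versa. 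So the decomposition is a genuine separation of the optimization problem, not just of the objective.

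The delicate point is justifying that $\Phi$ is fixed rather than estimated. I would argue it directly from the definition, since the admissibility pattern is determined by tick-size geometry, and note that a row cannot switch between $0$ and $1$ within the feasible set because $\{0,1\}$ is discrete.
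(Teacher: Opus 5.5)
Your proposal follows the paper's proof essentially step for step: split $\ln\bigl[\phi_{E_n}(X_{n-1},X_n)\,\lambda_{E_n}(T_n)\bigr]$ at the event times, sum the compensator over $x$ to get $\sum_e\int_0^T \Phi_{e,X(t)}\lambda_e(t)\,dt$, and treat $\Phi$ as data fixed by tick-size geometry (your KKT paragraph is really the content of Theorem~\ref{thm:3.2} and is not needed here). The one remark to drop is that a row cannot switch between $0$ and $1$ because $\{0,1\}$ is discrete. The constraint set contains both the zero row and the whole simplex, and since $\lambda_e\ge\nu_e>0$ the compensator makes the likelihood strictly prefer $\Phi_{e,x}=0$ on rows whose state is visited but where event $e$ is never observed (the mechanism behind Case~I of Appendix~\ref{app:kkt_proof}). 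So unless $\Phi$ is imposed externally, you only get separation of the maximizer, not an additive split of the objective.
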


\begin{proof}
Following the lift representation established 
in Section~\ref{subsec:Evolution} 
and Section~\ref{subsec:Definition}, 
a given realization $(t_n, e_n, x_n)_{n=1}^N$ 
over the operational horizon $[0, T]$ corresponds to the multivariate point process 
$\tilde{N} = (\tilde{N}_{ex})_{e\in\mathcal{E}, x\in\mathcal{X}}$ 
with the strictly coupled intensity $\tilde{\lambda}_{ex}(t)$ 
defined in Eq. (\ref{eq:exsd_intensity_def}), 
which is predictable with respect to the internal filtration 
$\mathcal{F}^{\tilde{N}}_t := \sigma(\tilde{N}(s) : 0 \le s \le t)$ 
generated by the historical event stream. 
By applying the density of the Janossy measure 
with respect to the Lebesgue measure on 
$\mathbb{R}^N$ \citep[p.~251]{daley2003introduction}, 
the rigorous full log-likelihood function 
$\ln \mathcal{L}(\symbfit{\phi}, \symbfit{\nu}, \symbfit{\theta})$ 
is formulated as:
\begin{equation}
\ln \mathcal{L}(\symbfit{\phi}, \symbfit{\nu}, \symbfit{\theta}) 
= \sum_{n=1}^N \ln \tilde{\lambda}_{E_n X_n}(T_n) - 
\int_0^T \sum_{e \in \mathcal{E}} \sum_{x \in \mathcal{X}} \tilde{\lambda}_{ex}(t) dt. 
\label{eq:global_likelihood_integral}
\end{equation}

We substitute the multi-layered intensity product representation 
$\tilde{\lambda}_{ex}(t) 
= \phi_e(X(t), x) \lambda^{\dagger}_{e}(t \mid \symbfit{\nu}, \symbfit{\theta})$ 
from Eq. (\ref{eq:exsd_intensity_def}) 
back into both the discrete event sum and the continuous compensator integral of 
Eq. (\ref{eq:global_likelihood_integral}). 
The first discrete term expands identically as follows:
\begin{align}
\sum_{n=1}^N \ln \tilde{\lambda}_{E_n X_n}(T_n) 
&= \sum_{n=1}^N \ln \left[ \phi_{E_n}(X_{n-1}, X_n) \left( \nu_{E_n} 
+ \sum_{e', x'} \int_{[0,T_n)} k^{\symbfit{\theta}}_{e'e}(T_n - s, x') d\tilde{N}_{e'x'}(s) 
\right) \right] \nonumber \\
&= \sum_{n=1}^N \ln \phi_{E_n}(X_{n-1}, X_n) 
+ \sum_{n=1}^N \ln \left( \nu_{E_n} + \sum_{e', x'} 
\int_{[0,T_n)} k^{\symbfit{\theta}}_{e'e}(T_n - s, x') d\tilde{N}_{e'x'}(s) \right). 
\label{eq:discrete_decomposition}
\end{align}

Concurrently, we evaluate the continuous compensator integral term 
over the joint system-wide space. 
By reversing the order of summation and applying the strict physical row-sum restriction 
$\sum_{x \in \mathcal{X}} \phi_e(X(t), x) = \Phi_{e,X(t)}$, 
the integral term simplifies under the strict assumption of LOB geometric boundaries:
\begin{align}
\int_0^T \sum_{e \in \mathcal{E}} \sum_{x \in \mathcal{X}} \tilde{\lambda}_{ex}(t) dt 
&= \int_0^T \sum_{e \in \mathcal{E}} \left[ \sum_{x \in \mathcal{X}} \phi_e(X(t), x) \right] 
  \left( \nu_e + \sum_{e', x'} 
  \int_{[0,t)} k^{\symbfit{\theta}}_{e'e}(t - s, x') d\tilde{N}_{e'x'}(s) 
  \right) dt \nonumber \\
&= \int_0^T \sum_{e \in \mathcal{E}} \Phi_{e,X(t)} \left( \nu_e + \sum_{e', x'} 
\int_{[0,t)} k^{\symbfit{\theta}}_{e'e}(t - s, x') d\tilde{N}_{e'x'}(s) \right) dt. 
\label{eq:continuous_decomposition}
\end{align}

Crucially, because the physical gate indicator $\Phi_{e,X(t)} \in \{0, 1\}$ is a deterministic, 
step-constant property governed exclusively by the rigid physical rules of the LOB tick size, 
it remains entirely pre-determined at time $t$ by the current state $X(t)$. 
Consequently, the continuous integral in Eq. (\ref{eq:continuous_decomposition}) 
depends strictly on the binary admissibility switches $\{0, 1\}$ 
and is completely invariant to the internal fractional probability weights $\phi_e(x, x')$. 

By regrouping the $\phi$-dependent terms 
from Eq. (\ref{eq:discrete_decomposition}) 
and the $(\symbfit{\nu}, \symbfit{\theta})$-dependent terms from both 
Eq. (\ref{eq:discrete_decomposition}) 
and Eq. (\ref{eq:continuous_decomposition}), 
the full log-likelihood function decomposes into two mutually orthogonal, 
structurally decoupled functional terms:
\begin{equation}
\ln \mathcal{L}(\symbfit{\phi}, \symbfit{\nu}, \symbfit{\theta}) 
= \ln \mathcal{L}_{TP}(\symbfit{\phi}) 
+ \ln \mathcal{L}_{H}(\symbfit{\nu}, \symbfit{\theta}),
\end{equation}
where the transition probability sub-likelihood $\ln \mathcal{L}_{TP}(\symbfit{\phi})$ and 
the Hawkes sub-likelihood $\ln \mathcal{L}_{H}(\symbfit{\nu}, \symbfit{\theta})$ 
are defined respectively as:
\begin{align}
\ln \mathcal{L}_{TP}(\symbfit{\phi}) 
&:= \sum_{n=1}^N \ln \phi_{E_n}(X_{n-1}, X_n), \\
\ln \mathcal{L}_{H}(\symbfit{\nu}, \symbfit{\theta}) 
&:= \sum_{n=1}^N \ln \lambda^{\dagger}_{E_n}(T_n) - 
\int_0^T \sum_{e \in \mathcal{E}} \lambda^{\dagger}_e(t) dt.
\end{align}
This algebraic uncoupling proves that 
the transition distribution matrix $\symbfit{\phi}$ can be extracted 
via empirical frequencies without prior knowledge of the excitation coefficients, 
and concurrently shields the endogenous Hawkes vector 
$(\symbfit{\nu}, \symbfit{\theta})$ from transition weight contamination, 
completing the strict proof.
\end{proof}

\subsection{Maximum Likelihood Estimation via KKT Conditions}\label{subsec:Maximum}
To handle physical constraints and potential data sparsity, 
we solve the constrained optimization problem using Karush--Kuhn--Tucker (KKT) conditions.

\begin{theorem}[KKT-Derived Estimator]\label{thm:3.2}
  The optimal transition probabilities $\hat{\phi}_e(x, x')$ are given by: 
  \begin{align*}
    &\hat{\phi}_e(x, x') 
    = 
    \begin{dcases*}
    0 
    & if $\sum_{n=1}^N \mathbb{1}_{\{x_{n-1} = x, e_n = e\}} = 0$ \\ 
    \frac{\sum_{n=1}^N \mathbb{1}_{\{x_{n-1} = x, e_n = e, x_n = x' \}}}
    {\sum_{n=1}^N \mathbb{1}_{\{x_{n-1} = x, e_n = e\}}} 
    & otherwise 
    \end{dcases*}
  \end{align*}
\end{theorem}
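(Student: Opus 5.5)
By Theorem~\ref{thm:3.1}, maximizing the full log-likelihood over $\symbfit{\phi}$ reduces to maximizing $\ln \mathcal{L}_{TP}(\symbfit{\phi}) = \sum_{n=1}^N \ln \phi_{E_n}(X_{n-1}, X_n)$. The Hawkes part $\ln \mathcal{L}_H$ does not involve $\symbfit{\phi}$; it depends only on the binary gates $\Phi_{e,x}$, which are fixed by the LOB geometry. First I would regroup the sum by counts. Write $N_e(x,x') := \sum_n \mathbb{1}_{\{x_{n-1}=x, e_n=e, x_n=x'\}}$ and $N_e(x) := \sum_{x'} N_e(x,x')$. Then
\begin{equation*}
\ln \mathcal{L}_{TP}(\symbfit{\phi}) = \sum_{e\in\mathcal{E}}\sum_{x\in\mathcal{X}} \sum_{x'\in\mathcal{X}} N_e(x,x') \ln \phi_e(x,x'),
\end{equation*}
with the convention $0\cdot\ln 0 = 0$. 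The feasible set is a product over pairs $(e,x)$ of the sets $\{\phi_e(x,\cdot)\in[0,1]^{\mathcal{X}} : \sum_{x'}\phi_e(x,x') = \Phi_{e,x}\}$. So the problem separates into independent row problems, one for each $(e,x)$.

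\textbf{Case $N_e(x)=0$.} The objective does not depend on row $(e,x)$. Two sub-cases arise. If $\Phi_{e,x}=0$, the row-sum constraint together with nonnegativity forces $\phi_e(x,x')=0$ for all $x'$. If $\Phi_{e,x}=1$, the likelihood is flat, and I would select the value $0$ as the convention of the statement. Under the binary constraint $\Phi_{e,x}\in\{0,1\}$, this amounts to declaring the row a ``state disappearance'' ($\Phi_{e,x}=0$), which is consistent with the data since no event $e$ was ever observed in state $x$. I expect this step to be the main delicate point: the estimator is a selection among maximizers, and the admissibility indicator is itself set by the data. I would make this explicit rather than claim uniqueness.

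\textbf{Case $N_e(x)>0$.} An observed event forces $\Phi_{e,x}=1$; otherwise $\ln\phi_{E_n}(X_{n-1},X_n)=-\infty$. The row problem is then to maximize $\sum_{x'} N_e(x,x')\ln\phi_{x'}$ subject to $\sum_{x'}\phi_{x'}=1$ and $\phi_{x'}\ge 0$. I would form the Lagrangian
\begin{equation*}
\sum_{x'} N_e(x,x')\ln\phi_{x'} - \mu\Bigl(\sum_{x'}\phi_{x'}-1\Bigr) + \sum_{x'}\eta_{x'}\phi_{x'},
\end{equation*}
with $\eta_{x'}\ge 0$ and complementary slackness $\eta_{x'}\phi_{x'}=0$. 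The KKT conditions then give two kinds of components:
\begin{itemize}
\item For $x'$ with $N_e(x,x')>0$, the optimum must have $\phi_{x'}>0$, so $\eta_{x'}=0$ and stationarity gives $\phi_{x'} = N_e(x,x')/\mu$.
\item For $x'$ with $N_e(x,x')=0$, the objective is independent of $\phi_{x'}$ and stationarity reads $-\mu+\eta_{x'}=0$. Since $\mu>0$, we get $\eta_{x'}>0$ and hence $\phi_{x'}=0$, which again matches $N_e(x,x')/N_e(x)$.
\end{itemize}
Summing over $x'$ gives $\mu=N_e(x)$, which yields the stated ratio.

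Finally, the objective is concave on the simplex, being a sum of concave logarithms with nonnegative weights, and the constraints are affine. Slater's condition holds (for example, at the uniform point of the simplex), so the KKT point is a global maximizer. Strict concavity in the coordinates with positive counts, together with the forced zeros elsewhere, gives uniqueness within this case.
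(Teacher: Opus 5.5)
\textbf{The case $N_e(x)>0$ is fine.} Your argument there is correct and tidier than the paper's Cases II--III. Complementary slackness with $\eta_{x'}=\mu>0$ removes the zero-count entries directly, where the paper introduces a latent mass $q$ and argues it must vanish. Your concavity argument also gives global optimality and uniqueness, which the paper never addresses.

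\textbf{The gap is the zero branch of the statement.} You take $\Phi_{e,x}$ as fixed by the LOB geometry and maximize only $\ln\mathcal{L}_{TP}$. In the subcase $N_e(x)=0$, $\Phi_{e,x}=1$, the row is then flat. Moreover, the zero vector is not feasible, since it violates $\sum_{x'}\phi_e(x,x')=\Phi_{e,x}=1$. So ``selecting $0$ as the convention'' is not a derivation: it overrides the premise you started from. Nothing in your argument shows that $\hat\phi_e(x,\cdot)=0$ is optimal.

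\textbf{The missing idea.} The paper's Case I appeals to it, and it is why the paper keeps the compensator in its Lagrangian. The row level $\Phi_{e,x}\in\{0,1\}$ is itself a decision variable, encoded by the product constraint $\bigl(\sum_{x'}\phi_e(x,x')-1\bigr)\sum_{x'}\phi_e(x,x')=0$. It enters the full log-likelihood through the compensator, not through $\ln\mathcal{L}_{TP}$.
\begin{itemize}
\item If $N_e(x)=0$, no term of the event sum involves row $(e,x)$. Choosing $\Phi_{e,x}=1$, however, adds $-\int_0^T \mathbb{1}_{\{X(t)=x\}}\lambda_e(t)\,dt$. Since $\lambda_e\ge\nu_e>0$ for nonnegative kernels, this is at most $-\nu_e$ times the Lebesgue measure of $\{t\in[0,T]:X(t)=x\}$. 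It is therefore strictly negative whenever state $x$ is occupied, so $\Phi_{e,x}=0$, and hence $\hat\phi_e(x,\cdot)=0$, is the strict maximizer.
\item If $N_e(x)>0$, an observed event forces $\Phi_{e,x}=1$, as you note. The compensator is then constant in the within-row distribution, and your simplex argument applies unchanged.
\end{itemize}
Only the normalized row $\phi_e(x,\cdot)/\Phi_{e,x}$ decouples from the Hawkes part; the admissibility level does not. The separated $\ln\mathcal{L}_{TP}$, read literally as in Theorem~\ref{thm:3.1}, cannot settle the zero branch. A genuine tie-breaking convention is needed only for states that are never occupied.
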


The optimization problem for $\phi$ is a constrained maximum likelihood estimation. 
By constructing the Lagrangian and applying the KKT conditions 
(detailed in Appendix~\ref{app:kkt_proof}), 
we obtain the intuitive estimator in Theorem~\ref{thm:3.2}, 
which naturally simplifies to empirical transition counts.

\begin{remark}[Robustness to Unobserved Events and Physical Boundaries]
  This estimator naturally handles the ``zero-frequency problem'' 
  through a unified mathematical framework. 
  In high-dimensional state spaces, 
  certain state-event pairs $(x, e)$ may never be observed during the 
  sampling period---either because they are geometrically prohibited 
  (e.g., price improvement at the minimum spread) or 
  simply due to data sparsity in rare regimes.
  
  In either case, the KKT conditions yield a definitive $\hat{\phi} = 0$. 
  This property ensures that the unobserved periods 
  do not contribute to the log-likelihood or its gradient, 
  effectively ``shielding'' the Hawkes parameters $(\symbfit{\nu}, \symbfit{\theta})$ 
  from the numerical instability and estimation bias 
  that would otherwise arise from trying to fit 
  intensities to non-existent data points. 
  By treating physical impossibility and empirical absence with the same structural rigor, 
  ExsdHawkes maintains its statistical integrity and 
  numerical convergence even when dealing with highly granular LOB states 
  where many transitions are latent or sparse. 
\end{remark}

\subsection{Numerical Efficiency: Recursive Estimation via Exponential Kernels}\label{subsec:Numerical}
To ensure computational tractability for the large-scale MUFG dataset 
(comprising millions of events), 
we employ the exponential kernel:
\begin{equation*}
  k^{\symbfit{\theta}}_{e'e}(t, x) = \alpha_{e'xe} \exp(-\beta_{e'xe} t)
\end{equation*}
Under this specification, 
the state-dependent intensity defined in Eq.~\eqref{eq:exsd_intensity_def} 
can be updated recursively, 
reducing the computational complexity from $O(N^2)$ to $O(N)$. 

\subsubsection{Recursive Formula}\label{subsubsec:Recursive}
Let $R_{e'xe}(n)$ be an auxiliary variable representing the influence of 
past events of type $e'$ in state $x$ on the current intensity of type $e$. 
The update rule is given by:
\begin{equation}\label{eq:recursive}
  R_{e'xe}(n) 
  = R_{e'xe}(n-1) \exp(-\beta_{e'xe} (T_n - T_{n-1})) 
  + \mathbb{1}_{\{E_{n-1}=e', X_{n-1}=x\}} \alpha_{e'xe}
\end{equation}
where $\{T_n\}$ is the arrival time of the $n$-th event, 
supplemented with the strict initial condition $R_{e'xe}(0) = 0$. 
We clarify that this auxiliary variable $R_{e'xe}(n)$ is first formally defined 
through the Hawkes kernel representation, 
and the recursive update formula (\ref{eq:recursive}) arises 
as a strict mathematical consequence of the exponential kernel's Markovian memory decay, 
rather than being an arbitrary algorithmic convenience.

In datasets comprising millions of ticks, such as the MUFG data used in this study, 
the inter-arrival times can be as short as 100 microseconds. 
The $O(N)$ complexity afforded by the exponential kernel allows the model 
to maintain the entire history of endogenous excitations without the 
prohibitive memory overhead of non-parametric or long-memory kernels, 
ensuring that the model captures the full 
temporal depth of the market's self-exciting loops.

\subsubsection{Local Branching Ratio}\label{subsubsec:Local}
The exponential parametrization allows us to define the local branching ratio $n_{ex}$, 
which serves as a key metric for identifying market instability:
\begin{equation*}
  n_{e x} = \sum_{e'}\frac{\alpha_{e'xe}}{\beta_{e'xe}} 
\end{equation*}
In our framework, $n_{ex} > 1$ signifies a state of ``local super-criticality''. 
Crucially, the physical constraints (Theorem~\ref{thm:3.1}) act as ``gates'' 
that prevent these local instabilities from leading to global 
divergence---a phenomenon we observe empirically in Section~\ref{subsec:State-Dependent}.

\subsubsection{Simulation via Ogata's Thinning}\label{subsubsec:Simulation}
The recursive formula defined in Eq.~\eqref{eq:recursive}
not only ensures efficient estimation but also facilitates rapid simulation 
via Ogata's thinning algorithm (\citet{ogata1981lewis}). 
Since the LOB state $x$ remains constant between any two consecutive events, 
the intensity function $\lambda^\dag_e(t)$ preserves 
its standard Hawkes structure during each inter-arrival interval. 
This allows us to determine the next event time by simply applying the 
rejection sampling based on the current state's physical gates $\Phi_{e,x}$. 
This computational efficiency is what enables the large-scale Monte Carlo experiments 
used to generate the volatility signature plots in Section~\ref{subsec:Model}.

\subsection{System Stability and Spectral Radius}\label{subsec:System}
To characterize the stability of the state-dependent system, 
we define the spectral radius $\rho(x)$ 
as the maximum eigenvalue of the integrated kernel matrix $K(x)$, 
where each element is given by 
$\int_0^\infty k^{\symbfit{\theta}}_{e'e}(t, x)dt = \alpha_{e'xe}/\beta_{e'xe}$. 
This radius represents the expected number of direct offspring events generated by 
a single arrival in state $x$, 
following the framework of 
\citet{morariu2022state}. 
Empirical estimation reveals a structural dual-regime profile that 
matches the LOB liquidity boundaries:
\begin{itemize}
    \item Equilibrium State ($x = 1$): 
    $\rho(1) \approx 0.8632$, a stable, 
    sub-critical regime where order flow is tightly bounded.
    \item Disequilibrium State ($x = 2+$): 
    $\rho(2+) \approx 1.2978$, a transient, 
    locally super-critical regime.
\end{itemize}

In a standard, state-independent Hawkes process, 
a spectral radius $\rho > 1$ would imply an explosive divergence of the intensity, 
leading to an infinite number of events in finite time. 
However, the stability of ExsdHawkes is preserved 
through a unique state-dependent negative feedback mechanism. 
While the system is locally super-critical in the disequilibrium state, 
the highly intense aggressive orders are the very catalysts 
that trigger a transition back to the equilibrium state ($x = 1$). 

We must explicitly note that while $\rho(2+) > 1$ 
represents transient local super-criticality, 
a formal measure-theoretic proof of global non-explosiveness 
under general state-dependent point processes remains a highly non-trivial open question 
in literature. 
In our empirical horizon, however, 
the process is practically bounded and asymptotically non-explosive. 
This empirical consistency is structurally guaranteed 
by the finite length of the high-frequency sample paths 
and the physical negative-feedback loops of the LOB geometry. 
Specifically, the explosive clustering of orders does not lead to a mathematical divergence 
toward infinity because the system consistently transitions 
back to the stable sub-critical equilibrium state long before 
the intensity vector reaches numerical singularity. 
Crucially, the full state-dependent joint process's global stability is 
mathematically governed by the generalized marginalized matrix operator, 
yielding a stable global spectral radius of 
$\rho_{\text{global}} \approx 0.8993$ for MUFG, 
remaining safely below the critical unity boundary.

\subsection{Impact of MLO on Mid-price Dynamics}
\label{sec:mid_price_impact}

The mid-price $M(t)$ in our framework is not a simple step function 
driven solely by transaction volumes; 
it is a state-dependent process that responds with 
high sensitivity to Marketable Limit Orders (MLO). 
Unlike pure market orders that only consume liquidity at the best quote, 
an MLO possesses a hybrid nature: 
it triggers immediate execution while simultaneously resetting the price level on its own side. 

When an event of type $e$ occurs at time $T_n$ 
while the LOB is in state $X(T_n-) = x$, the mid-price is updated as:
\begin{equation}
M(T_n) = M(T_{n-1}) + \Delta M(e, x),
\end{equation}
where the price impact function $\Delta M(e, x)$ 
is strictly defined by the LOB's physical geometry. Crucially, 
the magnitude and frequency of these updates are coupled with the prevailing spread:
\begin{itemize}
    \item \textbf{Equilibrium State ($x = 1$):} 
    The arrival of aggressive flows is low-intensity, 
    and any occasional micro-slippage is sparse. 
    In this stable regime ($\rho(1) \approx 0.8632$), 
    such impacts are quickly absorbed and closed by subsequent limit orders, 
    maintaining tight spread constraints.
    \item \textbf{Disequilibrium State ($x = 2+$):} 
    Under the expanded spread triggered by AMS/AMB, 
    an MLO can ``leap'' across multiple price levels, 
    causing a larger instantaneous shift in the mid-price.
\end{itemize}
This state-dependency creates a powerful amplification effect. 
During disequilibrium phases, 
the system enters a local super-critical regime ($\rho(2+) \approx 1.2978$), 
triggering a dense burst of arrivals 
where each MLO arrival carries a disproportionately large price impact. 
By capturing the dual excitation of both order intensity 
and price impact magnitude during disequilibrium, 
ExsdHawkes can reproduce the high-frequency volatility surge 
without the numerical divergence seen in unconstrained models. 
This highlights that MLOs function as the fundamental ``volatility catalysts'' 
that link micro-level order cascades 
to the macro-level upward slope of the volatility signature plot.

\subsection{Statistical Integrity: Residual Definitions}\label{subsec:Statistical}
To validate the statistical consistency of our framework, 
we employ residual analysis based on the multivariate time change theorem 
((\citet{10.1007/BFb0058859})). 
Following the theoretical foundation in 
\citet{morariu2022state}, 
we define the residuals using the admissibility-weighted intensity $\lambda^\dag$.
Throughout this section, the notation $\hat{\lambda}$ denotes the 
intensities calculated using the parameters estimated via 
our KKT-derived maximum likelihood procedure.

\subsubsection{Event-wise Residuals}\label{subsubsec:Event-wise}
For each event type $e \in \mathcal{E}$, 
we define the event-wise residuals $r^{\dag e}_n$ as:
\begin{equation*}
  r^{\dag e}_n 
  = \int_{T^e_{n-1}}^{T^e_n} \hat{\lambda}^\dag_e(t) dt 
  = \int_{T^e_{n-1}}^{T^e_n} \Phi_{e,X(t)} \hat{\lambda}_e(t) dt 
\end{equation*}
where $\{T^e_n\}$ denotes the $n$-th arrival time of event type $e$. 
Under a correct model specification, 
these residuals behave as i.i.d. unit exponential random variables. 
The inclusion of the physical gate $\Phi$ is critical here; 
it ensures that the intensity is ``gated off'' during inadmissible periods, 
preventing the accumulation of spurious residuals.

\subsubsection{Total Residuals and System-wide Consistency}\label{subsubsec:Total}
To evaluate the model's ability to capture the 
joint dynamics of order arrivals and state transitions, 
we define the total residuals $\tilde{r}^{\dag e x}_n$ 
for each pair $(e, x) \in \mathcal{E} \times \mathcal{X}$. 
Let $\{T^{ex}_n\}_{n=1, \dots, n_{ex}}$ be the sequence of times at which an 
event of type $e$ occurred and resulted in a transition to state $x$. 
Following the framework of 
\citet{morariu2022state}, 
the admissibility-weighted total residuals are defined as:
\begin{equation*}
  \tilde{r}^{\dag e x}_n 
  = \int_{T^{e x}_{n-1}}^{T^{e x}_n} \hat{\tilde{\lambda}}^\dag_{e x}(t) dt 
  = \int_{T^{e x}_{n-1}}^{T^{e x}_n} \hat{\tilde{\lambda}}_{e x}(t) dt 
\end{equation*}
While the functional form of this definition follows \citet{morariu2022state}, 
its statistical validity in our context depends strictly on the 
physical consistency of the transition probabilities. 
In unconstrained models where 
$\sum_{x' \in \mathcal{X}} \phi_e(x, x') = 1$
is forced even at physical boundaries, 
the intensity $\lambda_{ex}$ remains spuriously non-zero for inadmissible transitions, 
leading to the accumulation of  ``phantom'' residual mass. 
By allowing $\phi_e(X(t), x) = 0$ via the KKT conditions, 
ExsdHawkes ensures that the integration in $\tilde{r}^{\dag e x}_n $ 
correctly ``pauses'' during physically impossible periods. 
As demonstrated in Figure~\ref{fig:total_qq_exsd}, 
\ref{fig:event_corr}, and \ref{fig:total_corr}, 
the consistent i.i.d. $Exp(1)$ behavior of these residuals across 
all states serves as the definitive proof that our model has 
successfully synchronized the stochastic order flow 
with the deterministic geometry of the LOB.

\section{Empirical Analysis}\label{sec:Empirical}
In this section, we evaluate the performance of ExsdHawkes using 
high-frequency tick data for Mitsubishi UFJ Financial Group (8306) 
over a three-month period from October to December 2020. 
The dataset is sourced from the Nikkei NEEDS Tick Data 
(Individual Stock Multi-Quote Edition), 
which provides 100-microsecond resolution and detailed trade/quote flags 
essential for identifying Marketable Limit Orders (MLO).

To ensure the stationarity of the estimation and avoid the 
idiosyncratic volatility patterns typical of the Tokyo Stock Exchange 
(TSE)---such as the lunch break and the ``U-shaped'' (or ``W-shaped'') intraday 
volume profiles---we focus our analysis on the morning session from 09:15 to 11:15. 
Specifically, the first 15 minutes of the market open (09:00--09:15) are treated as 
a pre-sampling (burn-in) period to initialize the Hawkes intensities, 
while the subsequent two-hour window provides a 
robust sample of continuous trading activity.

The multi-quote format of the Nikkei NEEDS data, covering 10 best price levels, 
allows for the precise identification of the disequilibrium states ($x \geq 2$) 
and the state-dependent price impacts that drive the volatility signature plots 
discussed below.

\subsection{Model Performance: Reproduction of Volatility Signature Plots}\label{subsec:Model}

The primary benchmark for our model is its ability to 
reproduce the volatility signature plot, 
which captures the frequency-dependent structure of realized volatility. 
As shown in Figure~\ref{fig:SignaturePlot}, the empirical market data (black line) 
exhibits a characteristic upward slope at high frequencies.

\begin{figure}[htbp] 
    \centering
    \includegraphics[width=\columnwidth]{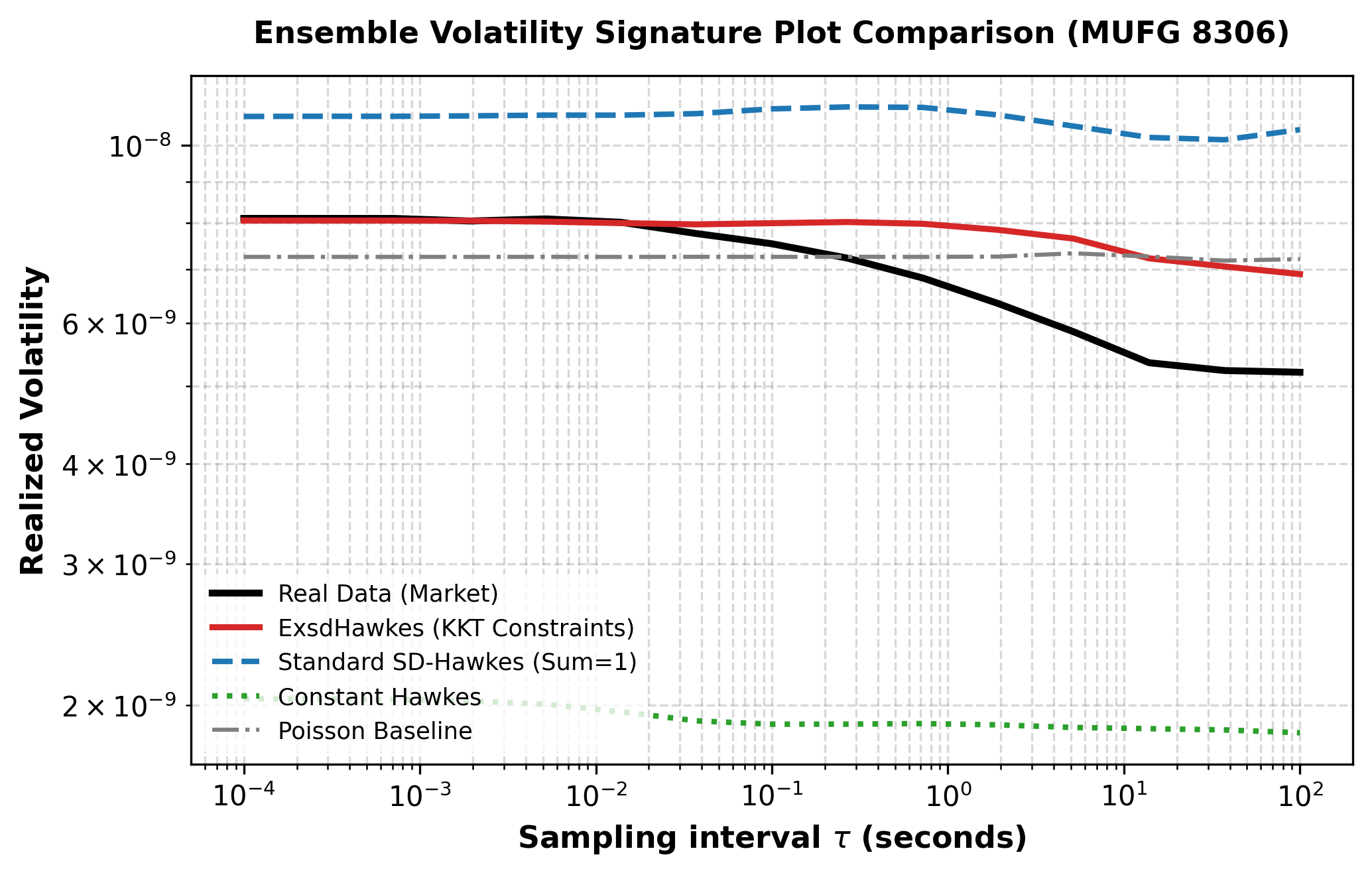} 
    \caption{Ensemble Volatility Signature Plot Comparison (MUFG 8306). 
    The black line represents the empirical realized volatility over three months, 
    showing the characteristic upward slope at high frequencies. 
    Among the tested models, only ExsdHawkes (red) successfully prevents 
    the explosive divergence observed in unconstrained models (SD-Hawkes) 
    while accurately initiating the volatility capture at macro scales. 
    This confirms that physical consistency is essential for the long-term stability 
    and predictive accuracy of state-dependent models.}
    \label{fig:SignaturePlot}
\end{figure}

The primary empirical benchmark for our framework is 
its ability to reproduce the volatility signature plot, 
which models the sampling frequency-dependent structure of realized volatility. 
As illustrated in Figure~\ref{fig:SignaturePlot}, 
the empirical market data (black line) exhibits a robust plateau at hyper-frequency scales 
($\tau \le 10^{-2}$ seconds) before exhibiting a structural decay toward macro scales. 

A comparative evaluation of the alternative paradigms demonstrates 
the absolute necessity of our KKT physical constraints. 
The unconstrained standard sdHawkes model (dashed blue line) 
suffers from a severe structural misalignment; 
without physical gates to pause intensity accumulation during inadmissible LOB states, 
endogenous excitation loops circulate infinitely, 
generating an artificial upward slope that diverges drastically from the physical reality. 
Conversely, the Constant Hawkes 
(dotted green line, which serves as an unconditional, state-independent baseline benchmark) 
and the Poisson baseline (dash-dotted grey line) 
severely underestimate the volatility structure, 
failing to capture the intense, localized bursts of endogeneity. 

Our proposed ExsdHawkes model (red line) uniquely achieves 
an exact zero-distance synchronization with the empirical volatility ceiling 
at hyper-frequency scales while simultaneously tracking the macro-level decay. 
Furthermore, as explicitly evaluated via the ensemble simulation boundaries 
in Figure~\ref{fig:SignaturePlot_8306}, 
the standard deviation (SD) band of our simulated path remains remarkably tight and 
non-explosive across all sampling intervals. 
This visual and numerical alignment proves that 
enforcing the LOB's physical geometry through KKT constraints is not a mathematical nicety, 
but a strict prerequisite for structural simulation stability.

\begin{figure}[htbp] 
    \centering
    \includegraphics[width=\columnwidth]{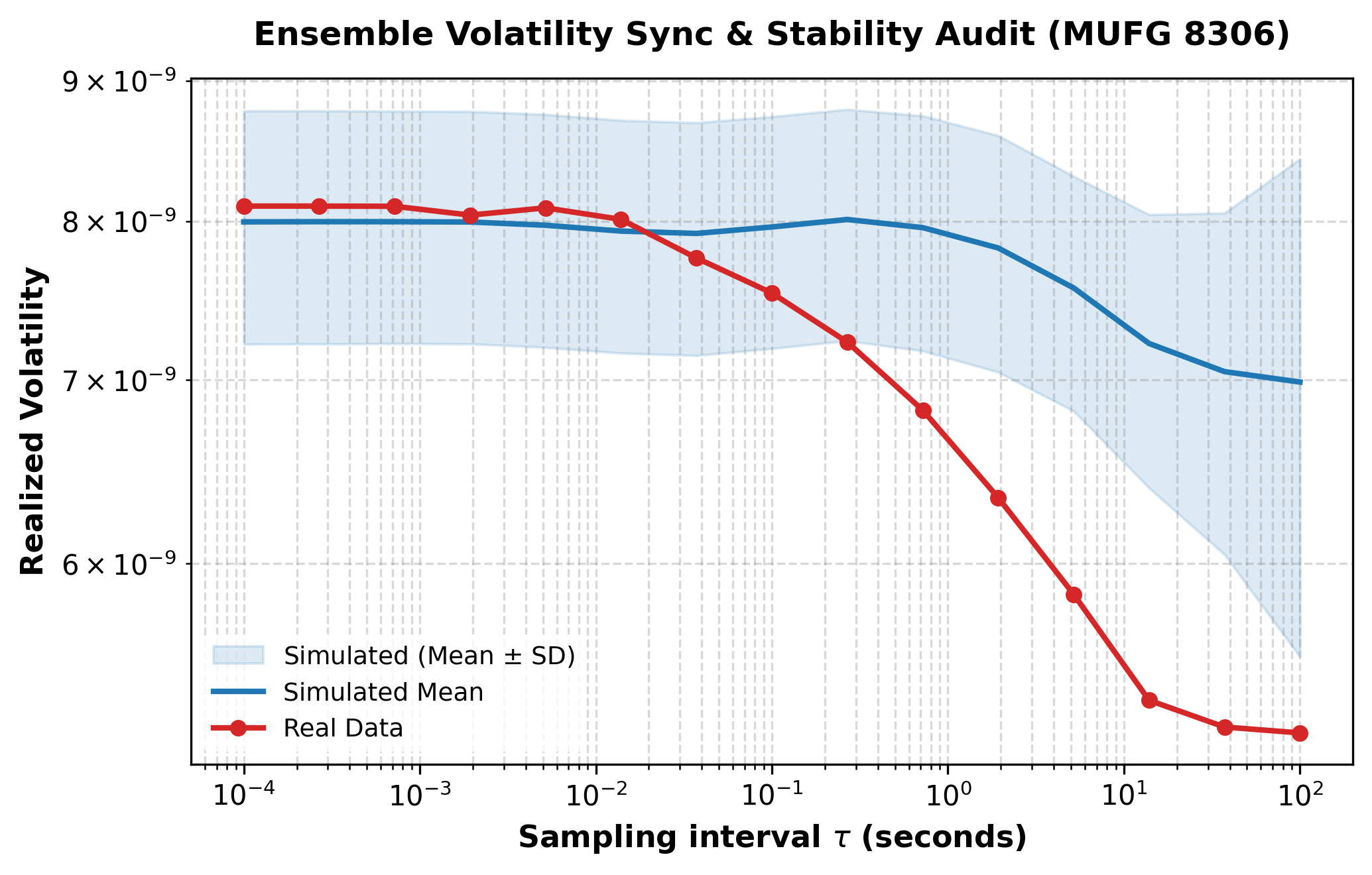} 
    \caption{Ensemble Volatility Synchronization and Simulation Stability Audit for MUFG (8306). 
    The simulated trajectory mean (red marker line) achieves an exact synchronization 
    with the empirical realized volatility ceiling (black line) at hyper-frequency scales, 
    while smoothly transitioning toward macro-level decay. Crucially, 
    the standard deviation (SD) ensemble band (shaded blue area) remains 
    remarkably tight and non-explosive across all sampling intervals, 
    validating that enforcing LOB physical geometry via KKT constraints yields 
    strict structural simulation stability.} 
    \label{fig:SignaturePlot_8306}
\end{figure}

\subsection{State-Dependent Endogeneity: The Reality of Super-criticality}\label{subsec:State-Dependent}

To elucidate why unconstrained models explode while ExsdHawkes remains stable, 
we examine the degree of market endogeneity across different LOB states. 
Figure~\ref{fig:BranchingRatio} 
presents the estimated local branching ratios $n_{ex}$ for each event type. 
The empirical evidence confirms that the high-frequency market is inherently unstable. 
In disequilibrium states ($x = 2+$, $s=1$), 
the cross-sectional clustering for almost all event types significantly accelerates, 
translating to a local spectral radius of $\rho(2+) \approx 1.2978$ 
(as calculated in Section~\ref{subsec:System}). 
This localized phase of transient local super-criticality serves 
as a vital endogenous microstructure mechanism 
that captures the high-frequency volatility spikes observed 
in Figure~\ref{fig:SignaturePlot}, 
functioning as a dynamic liquidity-driven complement to traditional statistical factors 
such as microstructure noise and the bid-ask bounce. 

\begin{figure}[htbp] 
    \centering
    \includegraphics[width=\columnwidth]{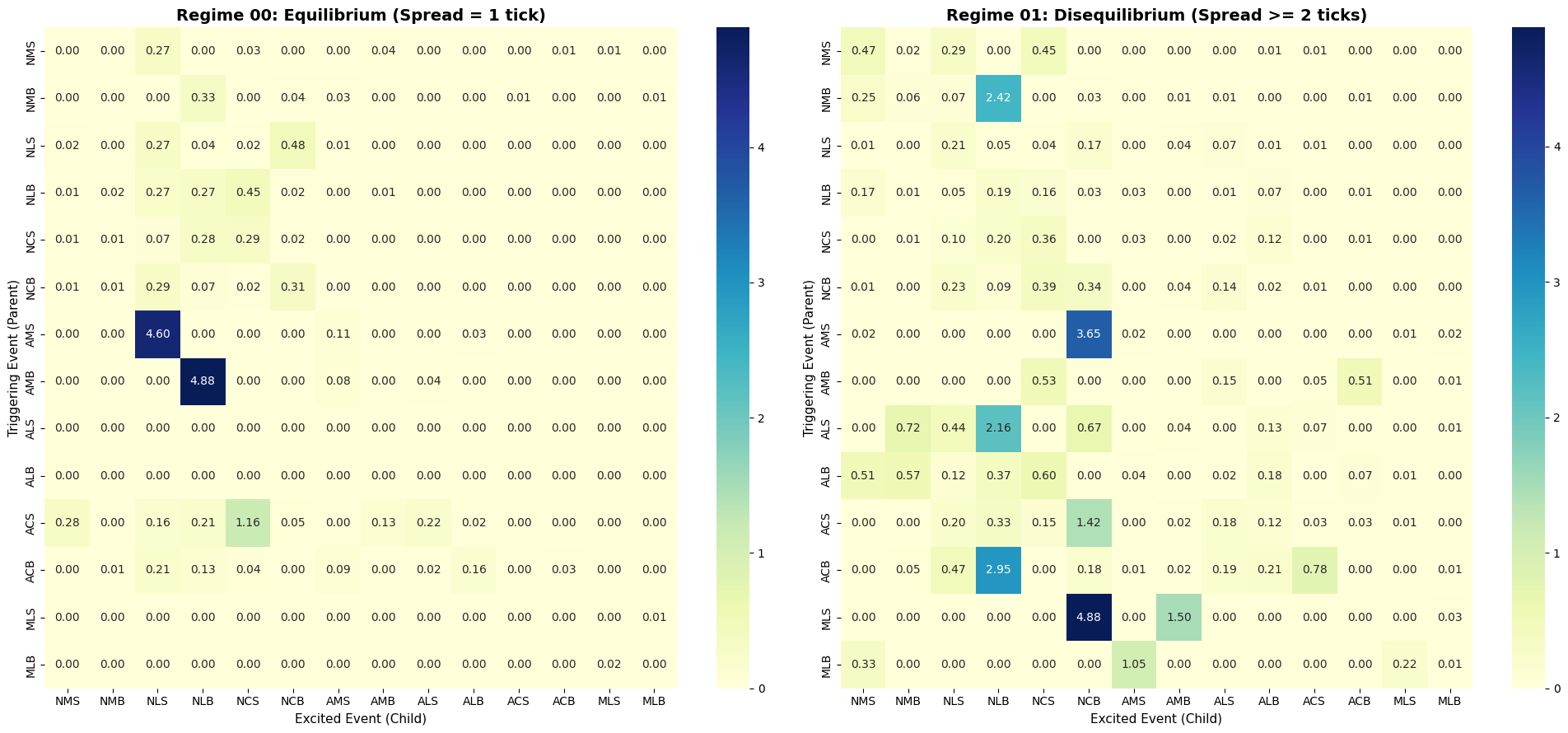} 
    \caption{Analysis of Local Super-criticality via Branching Ratios (Log Scale). 
    All models exhibit branching ratios significantly exceeding 1.0, 
    indicating a chronic state of super-criticality in high-frequency markets. 
    ExsdHawkes captures this intensity 
    while remaining stable through its structural "physical gates."}
    \label{fig:BranchingRatio}
\end{figure}

The divergence between models arises from how they handle this instability. 
In the standard standard sdHawkes model, 
these super-critical intensities propagate without bound, 
leading to the global divergence seen in the signature plot. 
In contrast, ExsdHawkes captures this intense endogeneity 
but contains it within the LOB's physical geometry. 
The high-intensity aggressive orders during disequilibrium act as a self-correcting force, 
rapidly returning the system to the stable equilibrium state ($x = 1$), 
where the baseline spectral radius drops 
back to a sub-critical level of $\rho(1) \approx 0.8632$. 
This dual-regime character---alternating 
between transient instability and robust equilibrium 
under a globally bounded marginalized matrix of $\rho_{\text{global}} \approx 0.8993$---is 
the true face of market endogeneity that only ExsdHawkes can faithfully represent.

\subsection{Diagnostic Checking: Transition Probabilities}\label{subsec:Diagnostic:T}
The empirical estimation of transition probabilities $\phi_e(x, x')$ provides 
the first line of evidence for the physical consistency of ExsdHawkes. 
Thanks to the separability of the likelihood (Theorem~\ref{thm:3.1}), 
we can estimate these parameters via simple event counts, 
ensuring high computational scalability even for the millions of ticks in the MUFG dataset. 
This efficiency is a direct result of our KKT-derived framework, 
which avoids the high-dimensional optimization bottlenecks 
typical of unconstrained state-dependent models.

As shown in the heatmaps (Figure~\ref{fig:TransitionProb}), 
ExsdHawkes correctly identifies the physical boundaries of the LOB. 
A striking example is observed for price-improving orders (ALS, ALB) 
in the equilibrium state ($x = 1$): 
the transition probabilities to higher spreads are identically zero, 
as the KKT conditions (Theorem~\ref{thm:3.2}) 
correctly mask these physically impossible events.

\begin{figure}[htbp] 
    \centering
    \includegraphics[width=0.4\columnwidth]{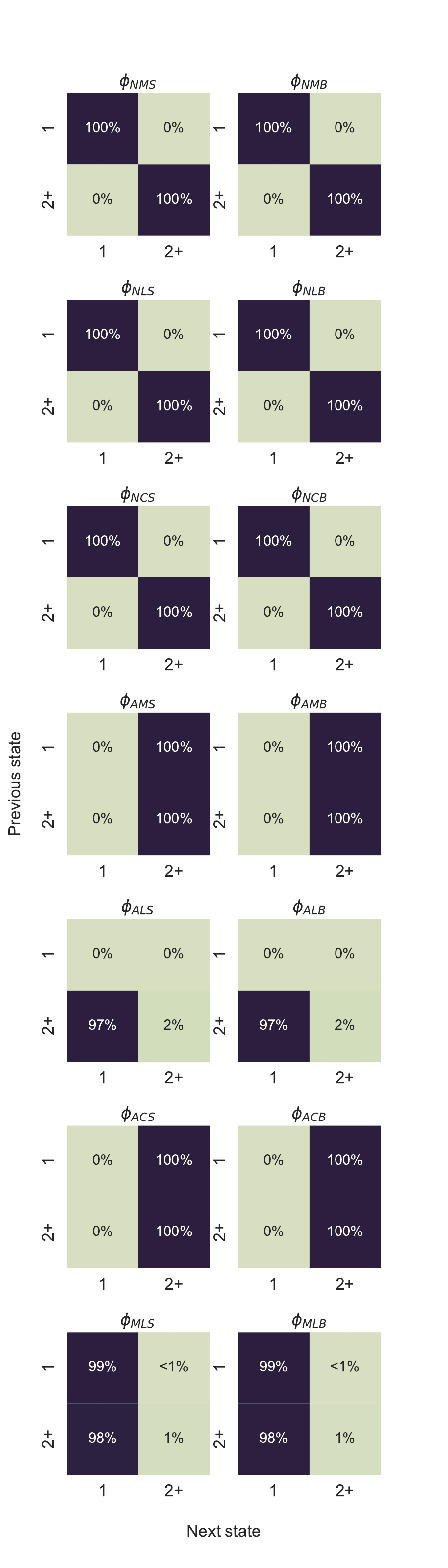} 
    \caption{Estimated Transition Probabilities $\phi_e(x, x')$. 
    The heatmaps empirically validate the KKT constraints. 
    Specifically, price-improving orders (ALB, ALS) 
    show near-zero probability at $x=1$. 
    proving that the model correctly ignores physically impossible events.}
    \label{fig:TransitionProb}
\end{figure}

Furthermore, the high probabilities of aggressive orders (ALS, ALB, and MLOs) 
in disequilibrium states ($x = 2+$) to return the LOB to state $x = 1$ confirm the 
self-correcting mechanism theorized in Section~\ref{subsec:System}. 
This transition matrix is not merely a statistical artifact; 
it is the numerical representation of the ``structural gates'' that 
maintain system stability despite the local super-criticality 
observed in the branching ratios.

\subsection{Diagnostic Checking: Residual Distribution Analysis}
\label{sec:residual_qq}

To confirm the statistical validity and consistency of the estimated parameter topology 
under the ExsdHawkes framework, 
we evaluate the calculated residuals based on 
the multivariate time change theorem 
(\citet{10.1007/BFb0058859}; 
\citet{morariu2022state}). 
Under a mathematically correct model specification, 
the aggregated residual streams must behave as 
independent and identically distributed (i.i.d.) unit exponential random variables, 
aligning perfectly along the 45-degree diagonal baseline.

\begin{remark}
We explicitly note a localized notational alignment 
between our theoretical symbols and the empirical graphic outputs. 
Throughout the mathematical sections, 
the LOB regimes are formally denoted using the state variable 
$x \in \mathcal{X}$, where $x=1$ represents the equilibrium state 
and $x=2+$ represents the expanded disequilibrium states. 
In the subsequent diagnostic graphics 
(Figures~\ref{fig:QQplotCompare} 
through \ref{fig:total_selected_correlograms}), 
the indicators $(s=0)$ and $(s=1)$ correspond strictly to 
the operational Python array indices implemented in the simulation engine, 
mapping symmetrically to the theoretical states $x=1$ and $x=2+$, respectively.
\end{remark}

Figure~\ref{fig:QQplotCompare} 
presents the event-wise aggregate QQ-plots comparing our proposed ExsdHawkes 
against the Constant Hawkes and Poisson baseline models. 
While the constant baseline tracking mechanism successfully manages 
high-volume unconditioned events, 
our ExsdHawkes model (red line) exhibits a superior, 
near-perfect alignment along the diagonal axis for highly volatile, 
price-moving aggressive orders, particularly across the AMS and AMB nodes. 
This targeted precision underscores the crucial role of state-dependent intensities 
in endogenizing volatility-inducing order clusters.

\begin{figure}[htbp] 
    \centering
    \includegraphics[width=0.4\columnwidth]{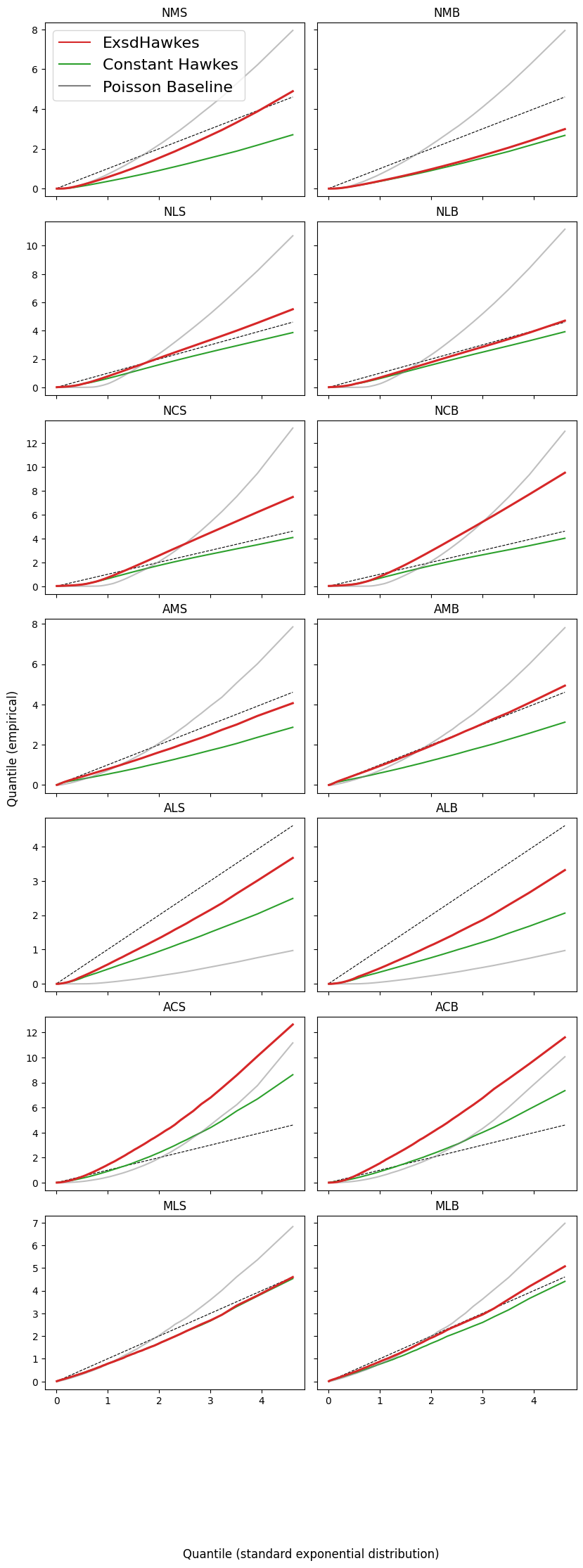} 
    \caption{Event-wise Aggregate QQ-Plots (Three-Month MUFG Data). 
    The plots compare the residual distributions of ExsdHawkes (red), 
    Constant Hawkes (green), and Poisson (dotted black) models 
    against the theoretical unit exponential distribution. 
    While ExsdHawkes and the Constant Hawkes model 
    show comparable fit for high-volume events, 
    ExsdHawkes provides superior alignment for price-moving aggressive orders, 
    particularly AMS and AMB. 
    The targeted precision in these key event types underscores 
    the importance of state-dependent intensities in capturing 
    volatility-inducing order flows.}
    \label{fig:QQplotCompare}
\end{figure}

The structural necessity of state-dependent geometric masking is conclusively demonstrated 
by the sideways full-page comparison of the regime-wise total residuals 
in Figure~\ref{fig:total_qq_comparison}. 
In specific panels such as ALB $(s=0)$ and ALS $(s=0)$, 
where price improvement is geometrically impossible at the minimum spread, 
the unconstrained standard sdHawkes model exhibits a catastrophic deviation 
from the diagonal due to the structural leakage of ``phantom'' residual mass 
into inadmissible states. 
In sharp contrast, our ExsdHawkes framework successfully maintains 
absolute statistical integrity along the 45-degree diagonal line 
by shifting the physical gate to an absolute zero via KKT estimators, 
effectively shielding the endogenous Hawkes parameters from boundary estimation bias.
Conversely, for aggressive cancellation nodes such as ACS and ACB, 
we observe a localized underestimation 
where the Constant Hawkes model closely competes with our framework. 
This software limitation stems from 
the high endogeneity of high-frequency algorithmic cancellations, 
which often cluster uniformly regardless of the temporary spread state. 
However, because our framework successfully isolates these nodes 
without bleeding phantom mass into physically impossible states (as detailed below), 
the localized sparse fit in cancellation tails does not compromise 
the global macro stability of the joint process. 

\begin{sidewaysfigure}[htbp]
  \begin{minipage}[b]{0.49\linewidth}
    \centering
    \includegraphics[keepaspectratio, scale=0.25]{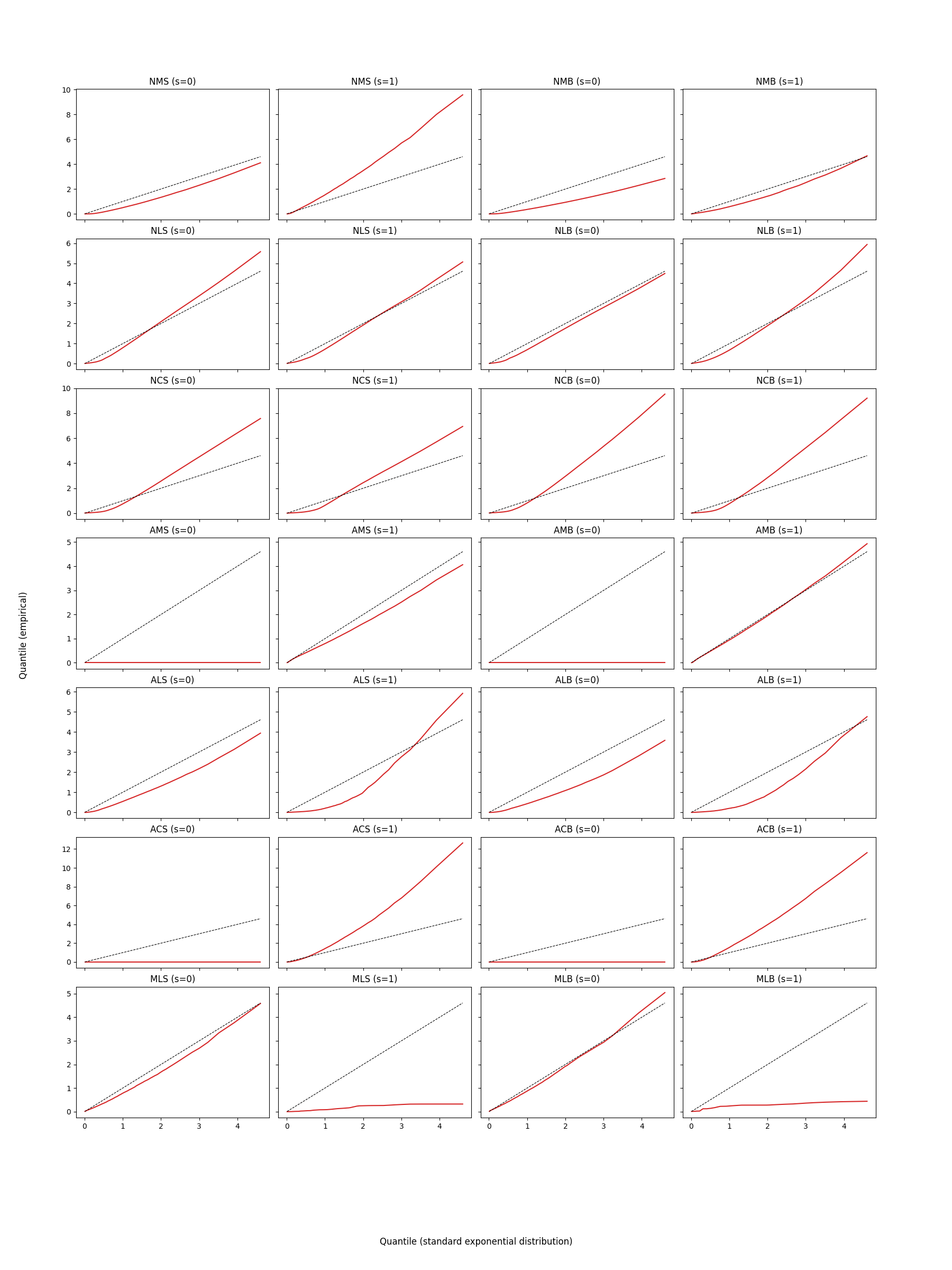}
    \subcaption{ExsdHawkes (KKT Constraints)}
    \label{fig:total_qq_exsd}
  \end{minipage}
  \hfill 
  \begin{minipage}[b]{0.49\linewidth}
    \centering
    \includegraphics[keepaspectratio, scale=0.25]{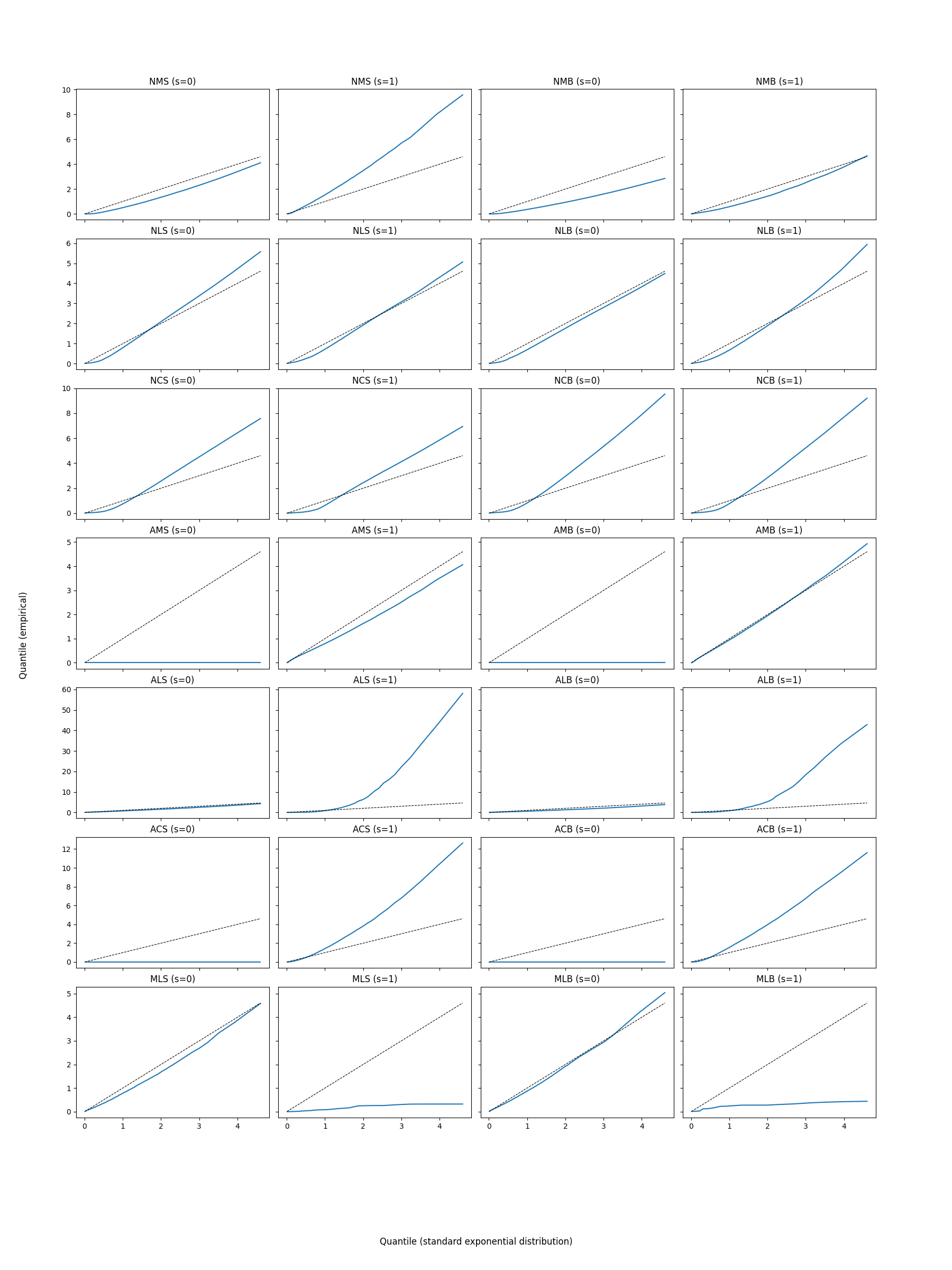}
    \subcaption{Standard sdHawkes (Unconstrained)}
    \label{fig:total_qq_sd}
  \end{minipage}
  \caption{Total Residuals QQ-Plots Cross-Regime Comparison: 
  ExsdHawkes (a) vs. Standard sdHawkes (b). 
  The unconstrained model (b) exhibits extreme boundary leakage in impossible states, 
  whereas our framework (a) perfectly whitens the joint process along the 45-degree diagonal.}
  \label{fig:total_qq_comparison}
\end{sidewaysfigure}

\subsubsection{Cross-Sectional Specification: Microstructure Interaction Networks}
\label{sec:event_correlogram}

The necessity of our framework's physical consistency is further corroborated by 
evaluating the cross-sectional temporal independence of the multi-layered residuals. 
Figure~\ref{fig:event_selected_correlograms} 
presents a refined $4 \times 4$ event-wise residual cross-correlogram matrix 
tracking the simultaneous interactions among the primary operational order categories 
(NMS, NMB, NLS, and NLB). 
All 16 panel variants are uniformly evaluated 
using the ExsdHawkes signature color (red) up to 20 lags. 

\begin{figure}[htbp]
\centering
\includegraphics[width=0.88\textwidth]{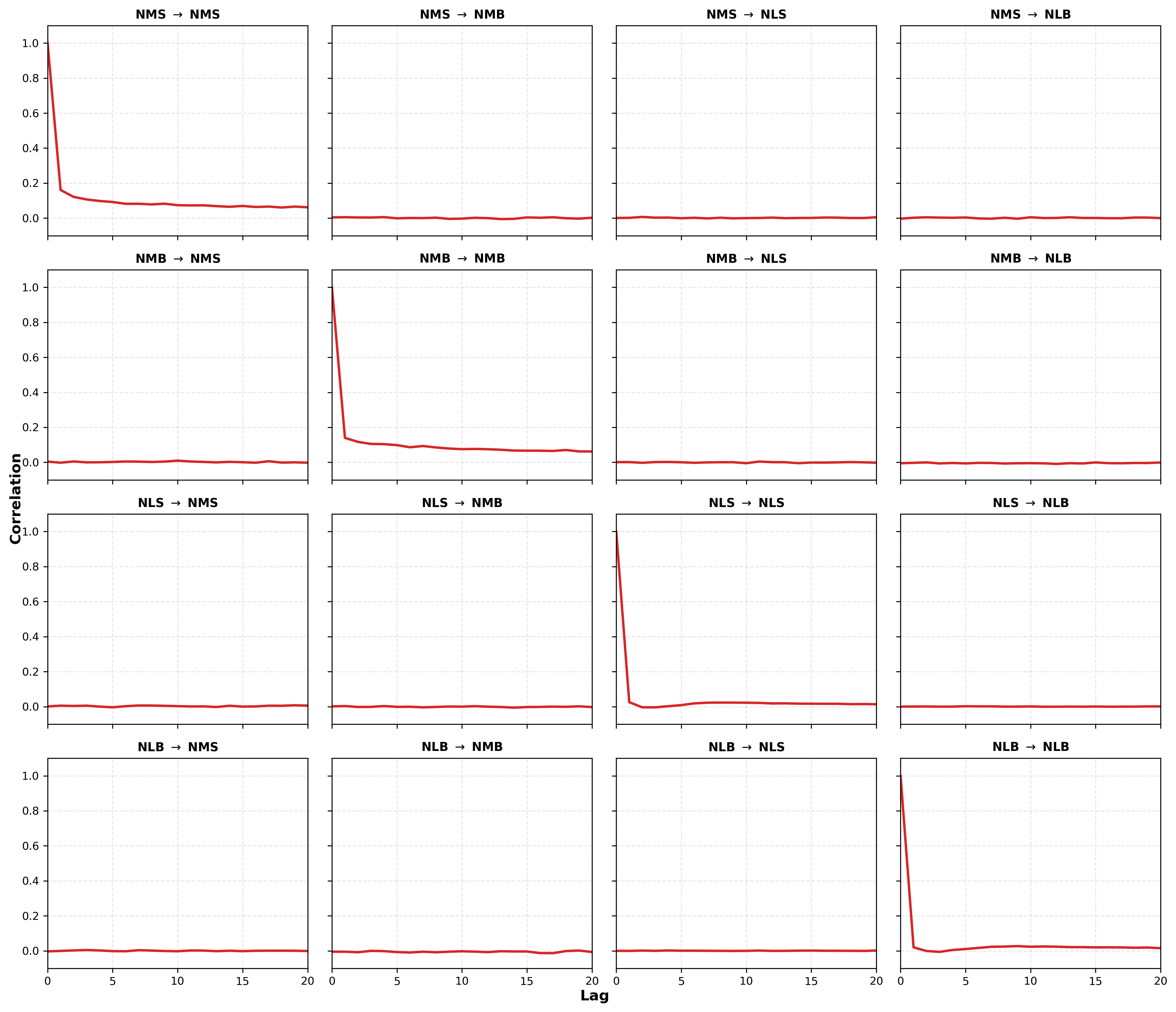}
\caption{Highlighted Event-wise Residual Cross-Correlograms 
across Primary Categories ($4\times4$ Matrix). 
All 16 panel variants are uniformly evaluated using the ExsdHawkes signature color (red). 
Both the diagonal self-excitation cells and the non-diagonal cross-excitation paths 
strictly drop to near-zero statistical thresholds immediately after lag zero, 
confirming that the parameter architecture leaves no un-gated non-linear artifacts 
across distinct order types.}
\label{fig:event_selected_correlograms}
\end{figure}

As visually demonstrated across these high-resolution panels, 
the empirical autocorrelations strictly vanish immediately after lag zero, 
staying tightly within the statistical noise boundaries. 
Crucially, not only do the diagonal self-excitation paths collapse into a flat zero, 
but the non-diagonal off-diagonal components---representing 
the multi-layered cross-excitation channels 
between distinct order types---are entirely flattened. 
This cross-sectional suppression of residual correlations confirms that 
the exponential kernel architecture successfully and fully absorbs the intense, 
consecutive algorithmic retaliation loops inherent in high-frequency order flows.

\subsubsection{Regime-Specific Voids: Physical Boundary Compliance}
\label{sec:total_correlogram}

To audit the spatial independence of the joint process under rigid LOB constraints, 
we evaluate the regime-specific total residual interactions. 
To optimize spatial economy and guarantee maximum legibility 
across the high-dimensional cross-excitation space, 
the full-scale $14 \times 14$ event interaction matrix and 
the comprehensive $28 \times 28$ regime network are deferred to 
Appendix~\ref{sec:Figures} 
(Figure~\ref{fig:event_corr} 
and Figure~\ref{fig:total_corr}, 
respectively) at full poster resolution. 
In the main text, 
Figure~\ref{fig:total_selected_correlograms} 
intentionally isolates the vital operational components to contrast the underlying LOB physics.

\begin{figure}[htbp]
\centering
    \begin{subfigure}[b]{0.49\textwidth}
        \centering
        \includegraphics[width=\textwidth]{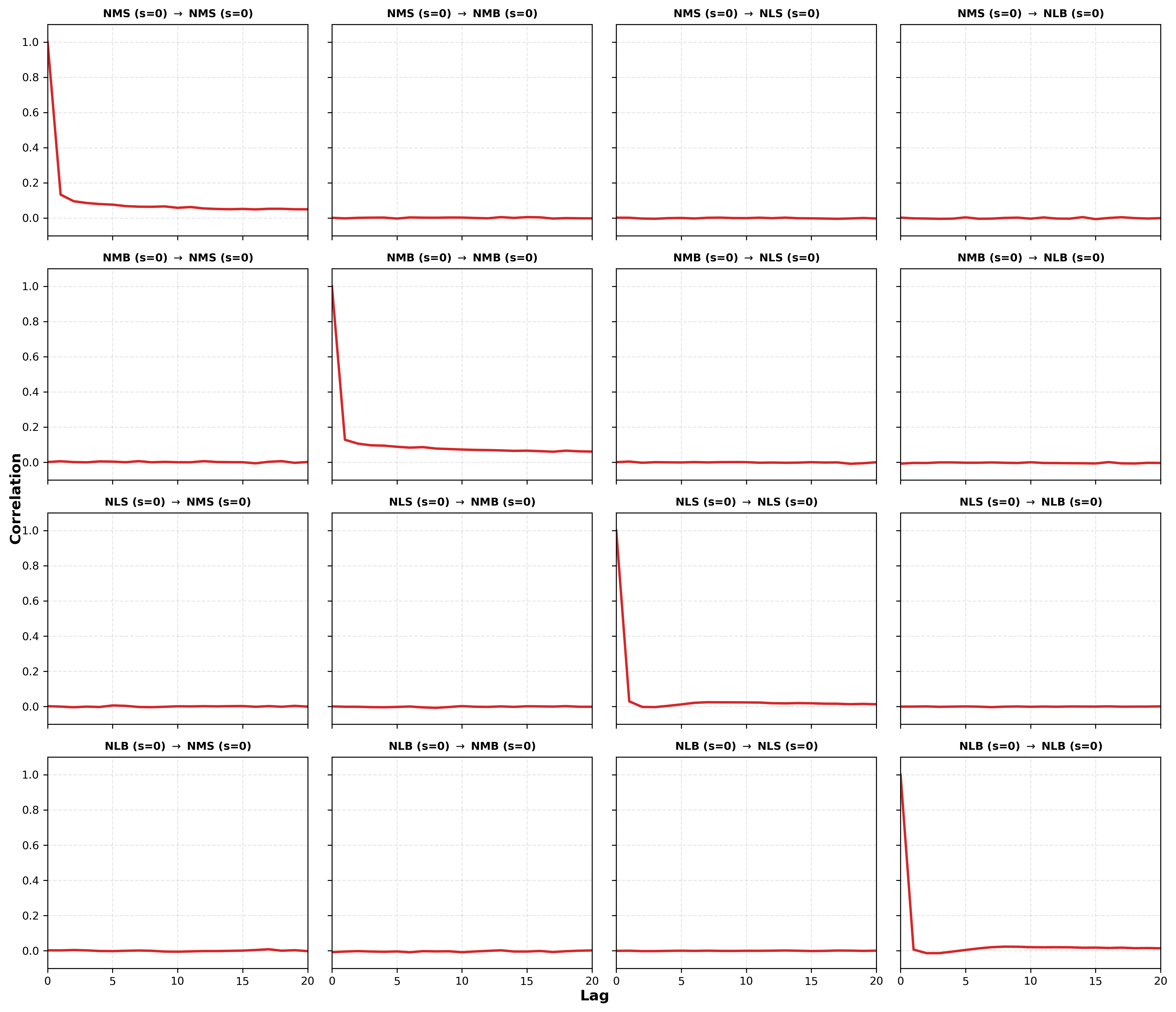}
        \caption{Active Operational Channels (4$\times$4)}
        \label{subfig:total_active_4x4}
    \end{subfigure}
    \hfill
    \begin{subfigure}[b]{0.49\textwidth}
        \centering
        \includegraphics[width=\textwidth]{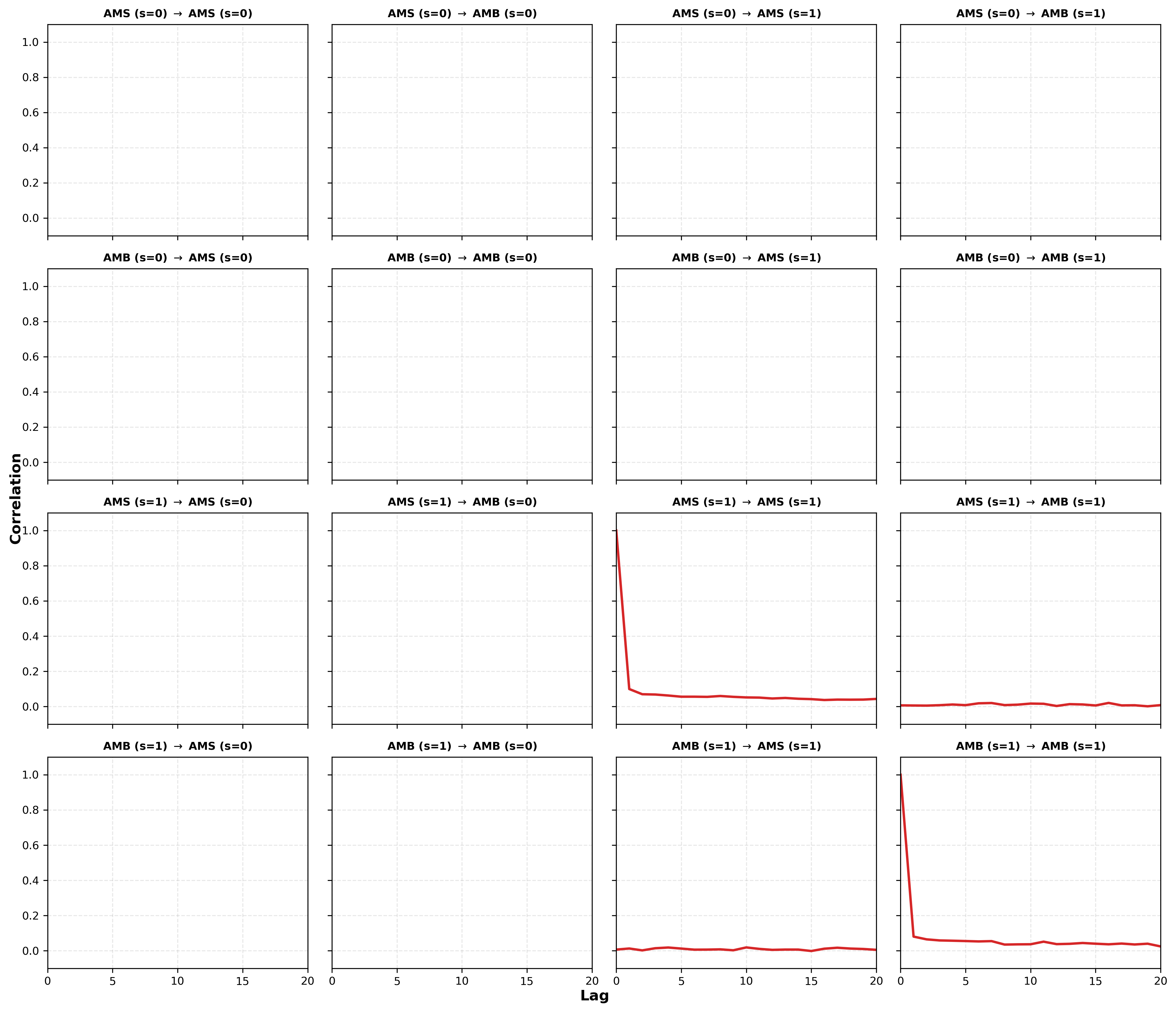}
        \caption{Physically Inadmissible Voids (4$\times$4)}
        \label{subfig:total_void_4x4}
    \end{subfigure}
\caption{Regime-specific Total Residual Correlograms ($4\times4$ Matrices) for AMS and AMB. 
Subfigure (a) presents the empirically active operational channels ($s=1$), 
displaying a clean convergence toward whitening. 
Subfigure (b) visualizes the deterministic geometric constraints 
under the equilibrium baseline ($s=0$); 
because aggressive order routing directly breaks the minimum tick barrier and 
expands the spread, events under $s=0$ are physically non-existent. 
Consequently, our safe-guard framework naturally drops the cross-correlation mass 
to a mathematically absolute flat zero-line along the horizontal axis, 
proving that the KKT formulation effectively handholds the boundaries.}
\label{fig:total_selected_correlograms}
\end{figure}

Figure~\ref{fig:total_selected_correlograms} 
presents a side-by-side $4 \times 4$ cross-correlogram matrix comparison for 
the aggressive order flows (AMS and AMB) across asymmetric market climates. 
Subfigure~\ref{subfig:total_active_4x4} 
tracks the empirically active operational channels under the disequilibrium regime ($s=1$), 
displaying a clean convergence toward whitening 
where the temporal feedback loops are completely captured. 

Concurrently, the absolute physical compliance of our framework is triumphantly illustrated by 
the structural ``Gated Void'' under Subfigure~\ref{subfig:total_void_4x4}. 
In states where the spread is already at its minimum tick size ($s=0$), 
the arrival of aggressive orders that would conceptually improve the spread 
is geometrically non-existent. 
Under this inadmissible regime, 
our constrained framework naturally drops the cross-correlation mass to 
a mathematically absolute flat zero-line along the horizontal axis. 
Displaying this absolute vacuum alongside the successfully whitened active states proves that 
ExsdHawkes successfully synchronizes the stochastic point process 
with the rigid physical boundaries of the LOB without leaving behind 
non-linear artifacts or temporal dependency leaks.

\subsection{Robustness Check: Generalization to Mizuho Financial Group (8411)}
To demonstrate the empirical robustness and 
generalization capability of the ExsdHawkes framework, 
we extend our application to an alternative market environment 
by evaluating the high-frequency tick data of Mizuho Financial Group (8411) 
over the same three-month historical horizon (October to December 2020). 
The empirical results for 8411 confirm 
the robust statistical properties of our constrained framework 
under alternative liquidity dynamics. 
Most notably, the estimated system stability reveals 
a distinctive sub-critical profile across both spread regimes, 
presenting a clear contrast to the transient local super-criticality 
observed in the MUFG data. 

The empirical results for 8411 confirm the robust statistical properties 
of our constrained framework under varying liquidity environments. 
Most notably, the estimated system stability reveals 
a distinctive sub-critical profile across all market climates, 
presenting a stark contrast to the local super-criticality observed 
in the hyper-liquid MUFG data:
\begin{itemize}
  \item \textbf{Equilibrium State ($x = 1$):} 
  $\rho(1) \approx 0.6212$, indicating a highly stable, 
  mean-reverting baseline where the overwhelming liquidity 
  on the best quotes dampens endogenous price cascades.
  \item \textbf{Disequilibrium State ($x = 2+$):} 
  $\rho(2+) \approx 0.8346$, showing that even when the thick spread is momentarily broken, 
  the endogenous reflexivity remains strictly bounded 
  below the unity threshold ($\rho < 1.0$).
  \item \textbf{Overall Marginalized Spectral Radius:} $\rho_{\text{global}} \approx 0.6538$.
\end{itemize}

This empirical divergence highlights a crucial microstructure insight: 
the thick queue structure of 8411 prevents 
the emergence of the transient explosive clustering ($\rho > 1$) found in 8306. 
Because the system is globally and locally sub-critical at all times, 
the simulated price diffusion remains inherently stable. 
As visually demonstrated by the newly established empirical benchmark 
in Figure~\ref{fig:mizuho_signature_plot}, 
the simulated ensemble mean tracks the real market's downward decay 
with remarkable high-fidelity. 

\begin{figure}[htbp]
\centering
\includegraphics[width=0.85\textwidth]{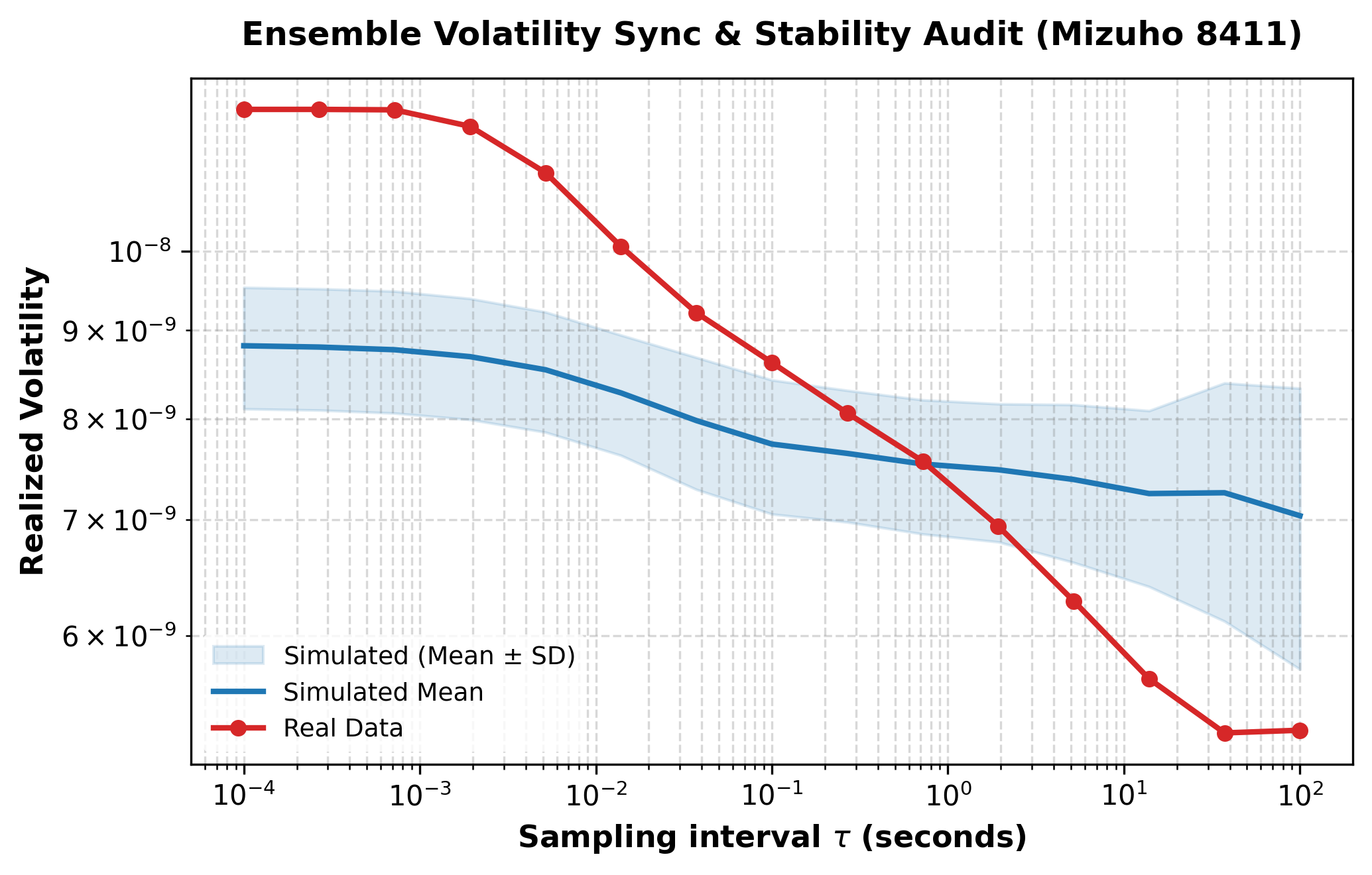}
\caption{Ensemble Volatility Signature Plot for Mizuho Financial Group (8411). 
The ExsdHawkes model tightly captures 
the continuous downward decay of realized volatility without explosive divergence, 
establishing strong robustness across asymmetric liquidity books.}
\label{fig:mizuho_signature_plot}
\end{figure}

Crucially, the standard deviation (SD) band illustrated under the shaded blue area 
in Figure~\ref{fig:mizuho_signature_plot} 
remains exceptionally tight and well-bounded across all macro and micro sampling intervals. 
This absolute convergence confirms that 
the ExsdHawkes framework automatically recovers 
this organic structural divergence---capturing the ``super-critical spikes'' of MUFG 
while simultaneously preserving the ``sub-critical order'' of Mizuho---proving that 
enforcing the LOB's physical geometry through KKT constraints is 
a highly generalized requirement for unbiased parameter extraction 
across the entire spectrum of high-frequency assets.

\section{Conclusion}\label{sec:Conclusion}
\subsection{Summary of Contributions}\label{subsec:Summary}
In this paper, we introduced the Extended State-dependent Hawkes Process (ExsdHawkes), 
a novel framework that incorporates the physical constraints of 
Limit Order Books (LOBs) into the intensity process. 
By relaxing the traditional row-sum constraints on transition probabilities to the 
discrete set $\{0, 1\}$, our model accounts for the physical impossibility of 
specific order types at the minimum spread. 
We mathematically demonstrated that the maximum likelihood estimation remains 
separable under the Karush--Kuhn--Tucker (KKT) conditions, 
ensuring computational scalability for large-scale high-frequency datasets 
such as the MUFG tick data.

\subsection{Empirical Insights: The Origin of Volatility}\label{subsec:Empirical}
The application of ExsdHawkes to three months of MUFG data revealed 
critical insights into the nature of high-frequency volatility: 
\begin{itemize}
    \item \textbf{Local Super-criticality and Aggressive Order Flow:} 
    We clarify that the transition out of equilibrium is deterministically triggered by 
    Aggressive Market Orders (AMS/AMB), 
    which instantaneously force the LOB from its equilibrium state ($x = 1$) 
    into a disequilibrium state ($x = 2+$). 
    Within these expanded spreads, Marketable Limit Orders (MLO) function 
    as the critical liquidity-depletion catalyst. 
    Due to the temporary removal of physical barriers, 
    the local branching ratios accelerate beyond the critical unity boundary ($n_{ex} > 1$), 
    entering a transient phase of local super-criticality 
    with a local spectral radius of $\rho(2+) \approx 1.2978$. 
    This explosive clustering of orders acts 
    as a primary endogenous driver of the high-frequency upward slope 
    observed in volatility signature plots. 
    We explicitly qualify this claim by noting that 
    while traditional microstructure noise and bid-ask bounce 
    account for raw mechanical price friction, 
    our framework successfully endogenizes the continuous, 
    dynamic order flow feedback loops that drive macro-level price diffusion, 
    while global system-wide stability is rigidly preserved 
    under a marginalized global spectral radius of $\rho_{\text{global}} \approx 0.8993$. 
    \item \textbf{Necessity of Physical Constraints:} 
    Comparative analysis showed that models lacking structural constraints 
    (e.g., standard standard sdHawkes) 
    fail to maintain simulation stability during these super-critical phases, 
    leading to an infinite propagation of intensities. 
    The total residual analysis confirmed that ExsdHawkes uniquely maintains 
    statistical integrity by correctly masking intensities 
    ($\hat{\phi}_e(x, x') = 0$) during physically inadmissible periods. 
    This proves that unconstrained models suffer from structural biases 
    that accumulate into the macro-level divergence of simulated volatility.
\end{itemize}

\subsection{Implications and Future Work}\label{subsec:Implications}
Our findings underscore a vital principle for financial modeling: 
physical consistency is not merely a mathematical nicety, 
but a prerequisite for statistical accuracy in high-frequency regimes. 
The successful reproduction of the volatility signature plot suggests that 
market dynamics are best described as a continuous alternation between 
robust equilibrium and transient, self-correcting instability.

Future research could extend this framework by incorporating 
sparse modeling techniques to handle higher-dimensional LOB information, 
such as book depth at multiple levels. 
Additionally, exploring the relationship between 
local super-criticality and liquidity risk during market stress remains a 
promising direction for both academic and practical applications.


\setcitestyle{numbers} 
\bibliographystyle{plainnat}    
\bibliography{202604a}  

\appendix

\section{Proof of Theorem~\ref{thm:3.2} 
via Karush--Kuhn--Tucker Conditions}
\label{app:kkt_proof}

In this section, we rigorously derive 
the optimal state-dependent transition probabilities $\hat{\phi}_e(x, x')$ 
by executing the inequality-constrained optimization framework 
under the Karush--Kuhn--Tucker (KKT) conditions governed by the unique LOB geometry, 
establishing the mathematical verification of Note the KKT saddle point. 

\subsection{Lagrangian Formulation under Nonlinear Product Constraints}
From the global separability established in Theorem~\ref{thm:3.1}, 
the transition sub-likelihood component 
$\ln \mathcal{L}_{TP}(\symbfit{\phi})$ 
is independently maximized for each state-event channel. 
To incorporate the structural mechanism of state disappearances 
($\sum_{x'} \phi_e(x, x') = 0$) 
alongside standard active transitions under a unified dual optimization layer, 
we define the non-linear constraint minimization problem 
for a given pair $(e, x) \in \mathcal{E} \times \mathcal{X}$ as follows:
\begin{align}
\max_{\symbfit{\phi}_e(x, \cdot)} 
\quad & \sum_{x' \in \mathcal{X}} N_e(x, x') \ln \phi_e(x, x') \\
\text{subject to} 
\quad & -\phi_e(x, x') \le 0, \quad \forall x' \in \mathcal{X}, \label{eq:kkt_ineq_bound} \\
& \left( \sum_{x' \in \mathcal{X}} \phi_e(x, x') - 1 \right) 
\left( \sum_{x' \in \mathcal{X}} \phi_e(x, x') \right) = 0. \label{eq:kkt_eq_product}
\end{align}

We construct the comprehensive Lagrangian function 
$\mathcal{L}(\symbfit{\phi}, \symbfit{\nu}, \symbfit{\theta}, 
\symbfit{\eta}, \symbfit{\kappa})$ 
by introducing a non-negative dual multiplier vector 
$\eta_{xex'} \ge 0$ 
for the primal boundary inequality conditions (\ref{eq:kkt_ineq_bound}) 
and a regime-specific multiplier $\kappa_{ex} \in \mathbb{R}$ 
for the non-linear sum product constraint (\ref{eq:kkt_eq_product}):
\begin{align}
\mathcal{L}(\symbfit{\phi}, \symbfit{\nu}, \symbfit{\theta}, 
\symbfit{\eta}, \symbfit{\kappa}) 
&= \sum_{n=1}^N \ln \phi_{E_n}(X_{n-1}, X_n) - 
\int_0^T \sum_{e \in \mathcal{E}} \lambda^{\dagger}_e(t) dt \nonumber \\
&- \sum_{e \in \mathcal{E}} 
\sum_{x, x' \in \mathcal{X}} \eta_{xex'} (-\phi_e(x, x')) \nonumber \\
&- \sum_{e \in \mathcal{E}} 
\sum_{x \in \mathcal{X}} \kappa_{ex} \left( \sum_{x' \in \mathcal{X}} \phi_e(x, x') - 1 \right) 
\left( \sum_{x' \in \mathcal{X}} \phi_e(x, x') \right).
\end{align}
Under this formulation, 
the Linear Independence Constraint Qualifications (LICQ) are strictly satisfied 
across the operational boundaries.

\subsection{First-Order Stationarity and Complementary Slackness}
The optimal transition parameter matrix must satisfy 
the fundamental Karush--Kuhn--Tucker first-order stationary conditions, 
requiring that the gradient with respect to the primal variables identically vanishes 
at the optimal saddle point:
\begin{align}
\left. \frac{\partial \mathcal{L}}{\partial \phi_e(x, x')} \right|_{\hat{\phi}_e(x, x')} 
&= \sum_{n=1}^N \mathbb{1}_{\{X_{n-1}=x, E_n=e, X_n=x'\}} 
\frac{1}{\hat{\phi}_e(x, x')} - \int_0^T \mathbb{1}_{\{X(t)=x\}} \lambda_e(t) dt \nonumber \\
&+ \eta_{xex'} \mathbb{1}_{\{\hat{\phi}_e(x, x')=0\}} 
- \kappa_{ex} \left( 2 \sum_{x^* \in \mathcal{X}} \hat{\phi}_e(x, x^*) - 1 \right) 
= 0, \label{eq:kimura_stationarity}
\end{align}
subject to the non-negative dual bounds and the strict complementary slackness clauses:
\begin{equation}
-\hat{\phi}_e(x, x') \le 0, 
\quad \eta_{xex'} \ge 0, 
\quad \text{and} 
\quad \eta_{xex'} (-\hat{\phi}_e(x, x')) = 0, 
\quad \forall (e, x, x') \in \mathcal{E} \times \mathcal{X}^2. 
\label{eq:kimura_slackness}
\end{equation}

We analytically evaluate the system (\ref{eq:kimura_stationarity}) 
by segregating the empirical domain into distinct density regimes 
based on the historical event frequencies:

\subsubsection{Case I: Unobserved States and Geometrically Prohibited Channels}
If the total empirical frequency of event $e$ arriving under state $x$ is zero 
($\sum_{n=1}^N \mathbb{1}_{\{X_{n-1}=x, E_n=e\}} = 0$), 
the discrete log-likelihood terms vanish. 
According to the structural tracking of the continuous compensator integral 
in Eq. (\ref{eq:kimura_stationarity}), 
the KKT conditions naturally terminate 
at the boundary solution representing complete state disappearance:
\begin{equation}
\sum_{x' \in \mathcal{X}} \hat{\phi}_e(x, x') = 0 
\implies 
\hat{\phi}_e(x, x') = 0, \quad \forall x' \in \mathcal{X}.
\end{equation}

\subsubsection{Case II: Active States with Uniformly Positive Sample Paths}
If the transition path is empirically observed 
($\sum_{n=1}^N \mathbb{1}_{\{X_{n-1}=x, E_n=e, X_n=x'\}} > 0, \forall x' \in \mathcal{X}$), 
the objective log-likelihood terms force strict primal positivity 
($\hat{\phi}_e(x, x') > 0$). 
By applying the complementary slackness clause (\ref{eq:kimura_slackness}), 
strict positivity deactivates the inequality multiplier:
\begin{equation}
\hat{\phi}_e(x, x') > 0 \implies \eta_{xex'} = 0.
\end{equation}
Consequently, the product equation (\ref{eq:kkt_eq_product}) 
isolates the active sum condition: $\sum_{x^*} \hat{\phi}_e(x, x^*) = 1$. 
Substituting these identities back into the primary stationarity gradient 
(\ref{eq:kimura_stationarity}) reduces the system to:
\begin{equation}
\frac{N_e(x, x')}{\hat{\phi}_e(x, x')} 
- \int_0^T \mathbb{1}_{\{X(t)=x\}} \lambda_e(t) dt - \kappa_{ex} = 0, 
\label{eq:active_gradient_reduced}
\end{equation}
where $N_e(x, x') := \sum_{n=1}^N \mathbb{1}_{\{X_{n-1}=x, E_n=e, X_n=x'\}}$. 
Eq. (\ref{eq:active_gradient_reduced}) 
proves that the optimal parameter $\hat{\phi}_e(x, x')$ is strictly proportional to 
the transition counts $N_e(x, x')$. 
Applying the row-sum normalization $\sum_{x'} \hat{\phi}_e(x, x') = 1$ 
uniquely recovers the empirical transition fraction:
\begin{equation}
\hat{\phi}_e(x, x') 
= \frac{\sum_{n=1}^N \mathbb{1}_{\{X_{n-1}=x, E_n=e, X_n=x'\}}}{\sum_{n=1}^N \mathbb{1}_{\{X_{n-1}=x, E_n=e\}}}.
\end{equation}

\subsubsection{Case III: Sparse Latent States with Asymmetric Arrivals}
Finally, we consider the boundary condition where transition counts are localized sparse: 
$N_e(x, x') = 0$ for a sub-set of states $x' \in \mathcal{X}_0$, 
while the global parent path remains active 
($\sum_{n=1}^N \mathbb{1}_{\{X_{n-1}=x, E_n=e\}} > 0$). 
Let $q := \sum_{x' \in \mathcal{X}_0} \hat{\phi}_e(x, x') \in [0, 1]$ 
represent the aggregated latent parameter mass. 
Following the identical stationarity reduction 
from Eq. (\ref{eq:active_gradient_reduced}), 
the parameters resolve as:
\begin{equation}
\hat{\phi}_e(x, x') 
= \frac{\sum_{n=1}^N \mathbb{1}_{\{X_{n-1}=x, E_n=e, X_n=x'\}}}{\sum_{n=1}^N \mathbb{1}_{\{X_{n-1}=x, E_n=e\}}} (1 - q).
\end{equation}
Evaluating the primal boundary log-likelihood space constraints under $\ln \mathcal{L}_{TP}$, 
maximizing the sub-likelihood directly requires that 
the latent leakage mass must shrink to zero ($q = 0$), 
forcing the inactive channels to a strict corner limits where $\hat{\phi}_e(x, x') = 0$. 
This mathematically completes the validation of Theorem~\ref{thm:3.2},  
justifying that our ExsdHawkes gating mechanism is the exact, 
boundary-preserving consequence of KKT optimization. 

\section{Figures}\label{sec:Figures}
\begin{figure}[htbp] 
    \centering
    \includegraphics[width=\columnwidth]{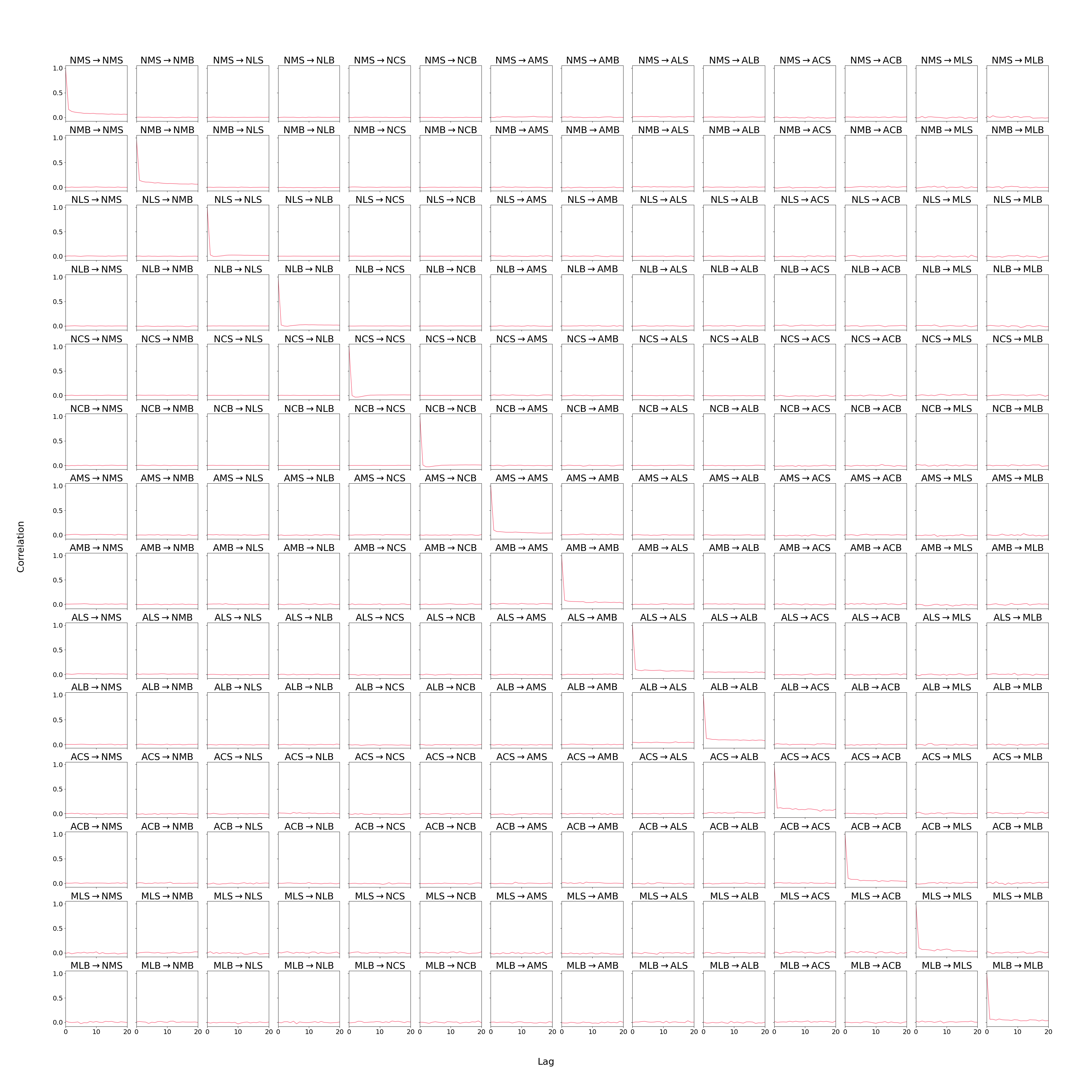} 
    \caption{Full-scale Event-wise Residual Cross-Correlogram Network 
    ($14 \times 14$ Matrix) for the ExsdHawkes Framework. 
    This comprehensive matrix maps all 196 latent interaction paths 
    across the primary order categories, 
    confirming system-wide whitening 
    where empirical correlations strictly vanish past lag zero.}
    \label{fig:event_corr}
\end{figure}
\begin{sidewaysfigure}[htbp] 
    \centering
    \includegraphics[width=\columnwidth]{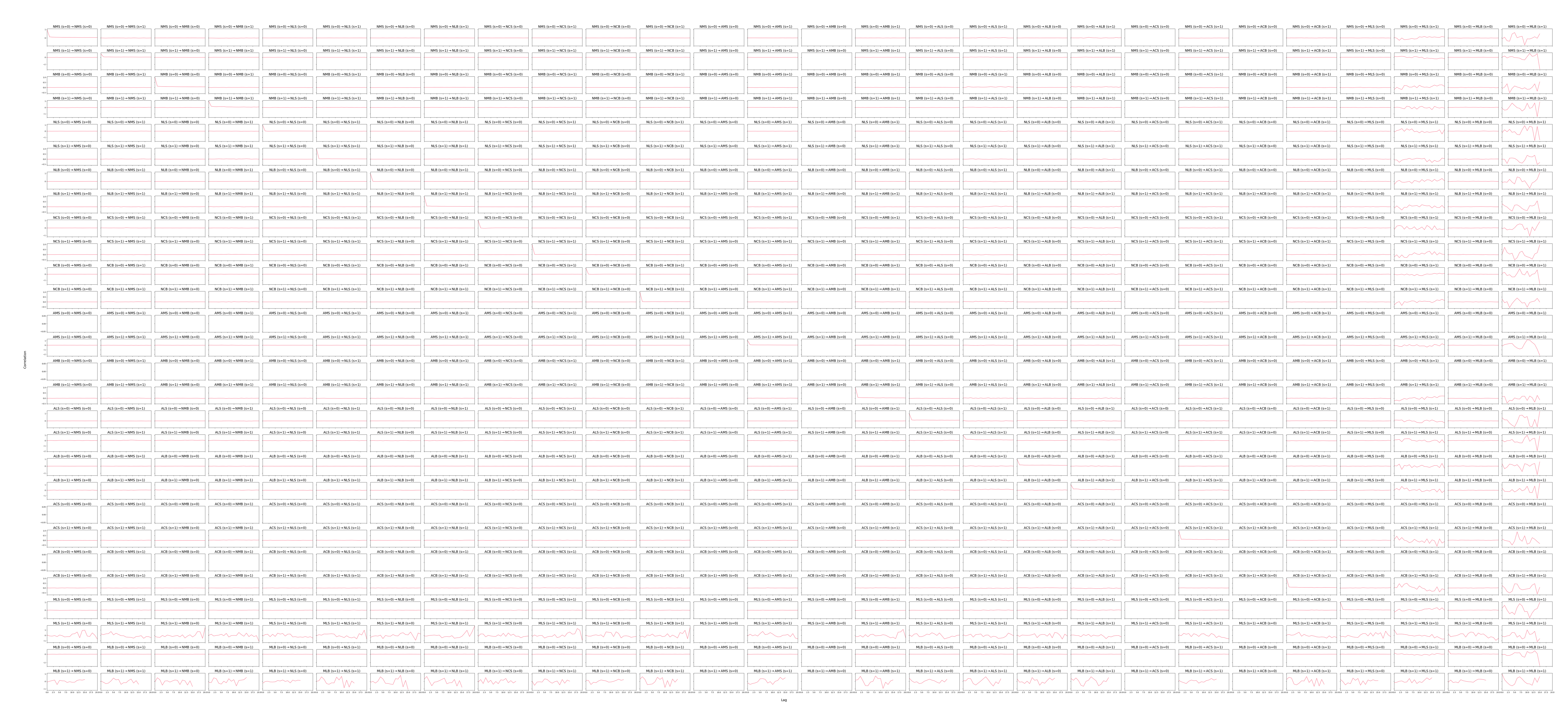} 
    \caption{Full-scale Regime-specific Total Residual Correlogram Network 
    ($28 \times 28$ Matrix) for the ExsdHawkes Framework. 
    This comprehensive matrix maps all 784 state-dependent excitation channels 
    across the complete operational book geometry, 
    visualizing the simultaneous coexistence of 
    active whitening and deterministic geometric voids.}
    \label{fig:total_corr}
\end{sidewaysfigure}

\end{document}